\newtheorem{de}{Definition}
\newtheorem{pro}{Proposition}
\newtheorem{cor}{Corollary}
\newtheorem{teo}{Theorem}
\newtheorem{rem}{Remark}
\newtheorem{exa}{Example}
\newcommand{\F}{{\mathbb{F}}}
\title[Quantum codes from affine variety codes]{Quantum codes from affine variety codes and their subfield-subcodes}
\author{C. Galindo, F. Hernando}
\curraddr{\texttt{Carlos Galindo and Fernando Hernando:} Instituto
Universitario de Matem\'aticas y Aplicaciones de Castell\'on and
Departamento de Matem\'aticas, Universitat Jaume I, Campus de Riu
Sec. 12071 Castell\'{o} (Spain).} \email{{\rm Galindo:}
galindo@mat.uji.es; {\rm Hernando:} carrillf@mat.uji.es}
\date{}
\thanks{Supported by Spain Ministry of Economy
MTM2012-36917-C03-03 and University Jaume I: PB1-1B2012-04.}
\begin{document}

\begin{abstract}
We use affine variety codes and their subfield-subcodes for obtaining quantum stabilizer codes via the CSS code construction. With this procedure we get codes with good parameters, some of them exceeding the quantum Gilbert-Varshamov bound given by Feng and Ma.
\end{abstract}

\maketitle

\section*{Introduction}
Shor's algorithm \cite{22RBC} for factoring integers opens the possibility of breaking some cryptographical systems. This is a clear example of why scientists are interested in computers based on the principles of quantum mechanics. The fact that arbitrary quantum states cannot be replicated seemed to suggest that error correction could not be used on quantum mechanical systems \cite{26RBC}. However this is not true as showed in  \cite{23RBC}. Binary stabilizer codes are the most studied quantum error-correcting codes. There exist a lot of papers which consider them, for simplicity we only cite \cite{19kkk, 38kkk} as seminal works.

In this paper, we are interested in more general stabilizer codes defined over finite fields and constructed by using a class of linear error-correcting codes called affine variety ones. For us $q=p^r$ will be a positive integer power of a prime number $p$ and $\mathbb{C}^q$ the $q$-dimensional complex vector space representing the states of a quantum mechanical system. $|x\rangle$ will be the vectors of a distinguished orthonormal basis of $\mathbb{C}^q$, where $x \in \mathbb{F}_q$, $\mathbb{F}_q$ being the finite field with $q$ elements. By definition, a {\it quantum error-correcting code} will be a $s$-dimensional subspace of $ \mathbb{C}^{q^n} = \mathbb{C}^q\otimes \mathbb{C}^q \otimes \cdots \otimes \mathbb{C}^q$. Let $a, b \in \mathbb{F}_q$, then the unitary operators on $\mathbb{C}^q$, $X(a) |x\rangle = |x+a\rangle$ and $Z(b) |x\rangle = \beta^{\mathrm{tr}(bx)}|x\rangle$, where $\mathrm{tr}: \mathbb{F}_q \rightarrow \mathbb{F}_p $ is the trace map and $\beta$ a primitive $p$th root of unity, allow us  to consider the set $\varepsilon = \{ X(a) Z(b) | a, b \in \mathbb{F}_q\}$ of error operators. For $\mathbf{a}=(a_1, a_2, \ldots, a_n) \in \mathbb{F}_q^n$, define $X(\mathbf{a}) := X(a_1) \otimes X(a_2) \otimes \cdots X(a_n)$ and $Z(\mathbf{a})$ analogously and write $\varepsilon_n = \{  X(\mathbf{a}) Z(\mathbf{b}) | \mathbf{a}, \mathbf{b} \in \mathbb{F}_q^n \}$ a nice error basis on the complex space $ \mathbb{C}^{q^n}$. A {\it stabilizer code} $C$ is a non-zero  subspace of  $\mathbb{C}^{q^n}$ such that
$
C= \cap_{H \in \Delta} \{\mathbf{v} \in \mathbb{C}^{q^n} | H \mathbf{v} = \mathbf{v} \},
$
for some subgroup $\Delta$ of the group spanned by $\varepsilon_n$, $G_n$.

A stabilizer  code $C$ has minimum distance $d$ if, and only if, it can detect all errors in $G_n$ with weight less than $d$, where the weight  is the number of nonidentity tensor components, although some error of weight $d$ cannot be detected. We will say that a code $C$ as above is an $((n,s,d))_q$-code. When the code is an $((n,q^k,d))_q$-code, we will simply say that it is an $[[n,k,d]]_q$-code.  $C$  is said to be {\it pure to} $t$ whenever the group $\Delta$ does not contain non-scalar matrices whose weight is less than $t$ and $C$ is called {\it pure} whenever it is pure to its minimum distance.

As in the binary case, classical codes can be used to provide quantum codes. The following result gives a first link between them and it can be found in \cite[Corollary 19]{kkk} (see also \cite{19kkk, BE, AK}).
\begin{pro}
Assume the existence of an $[n,k,d]$ linear code $E$ over $\mathbb{F}_{q^2}$  such that the dual code of $E$, $E^{\perp_1}$, with respect to  the Hermitian inner product,  satisfies $E^{\perp_1} \subseteq E$. Then, there exists an $[[n,2k-n,\geq d]]_q$-quantum code over $\mathbb{F}_{q}$ which is pure to $d$.
\end{pro}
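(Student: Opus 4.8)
The plan is to reduce the statement to the symplectic description of stabilizer codes. Recall from the theory of stabilizer codes (see, e.g., \cite{kkk}) that an $\mathbb{F}_q$-linear stabilizer code of length $n$ is described by an $\mathbb{F}_q$-linear subspace $D\subseteq\mathbb{F}_q^{2n}$ that is self-orthogonal for the trace-symplectic form $\langle(\mathbf{a}\mid\mathbf{b}),(\mathbf{a}'\mid\mathbf{b}')\rangle_s:=\mathrm{tr}\big(\mathbf{b}\cdot\mathbf{a}'-\mathbf{b}'\cdot\mathbf{a}\big)$: if $D\subseteq D^{\perp_s}$ and $\dim_{\mathbb{F}_q}D=n-k$, then one obtains an $[[n,k,d']]_q$-code whose minimum distance $d'$ is the least symplectic weight occurring in $D^{\perp_s}\setminus D$ (or in $D^{\perp_s}\setminus\{0\}$ when $D=D^{\perp_s}$), and which, since under this dictionary the nonscalar elements of the defining group $\Delta$ correspond to the nonzero words of $D$ and weights correspond to symplectic weights, is pure to $t$ exactly when $D$ has no nonzero word of symplectic weight below $t$. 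So it suffices to produce, out of $E$, such a $D$ of $\mathbb{F}_q$-dimension $2(n-k)$ for which every nonzero word of $D$ and every word of $D^{\perp_s}\setminus D$ has symplectic weight at least $d$.

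First I would set $C:=E^{\perp_1}$; as the Hermitian dual is an inclusion-reversing involution, $C^{\perp_1}=E$, so the hypothesis $E^{\perp_1}\subseteq E$ reads $C\subseteq C^{\perp_1}$, i.e. $C$ is a Hermitian self-orthogonal $[n,n-k]$-code over $\mathbb{F}_{q^2}$. Next, fix a nonzero $\theta\in\mathbb{F}_{q^2}$ with $\mathrm{tr}_{\mathbb{F}_{q^2}/\mathbb{F}_q}(\theta)=0$ and consider the $\mathbb{F}_q$-bilinear form on $\mathbb{F}_{q^2}$ given by $\kappa(u,v):=\mathrm{tr}_{\mathbb{F}_{q^2}/\mathbb{F}_q}(\theta^{-1}uv^q)$. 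Using that $u^{q+1}\in\mathbb{F}_q$ and $\mathrm{tr}_{\mathbb{F}_{q^2}/\mathbb{F}_q}(\theta^{-1})=\mathrm{tr}_{\mathbb{F}_{q^2}/\mathbb{F}_q}(\theta)/\theta^{q+1}=0$, one checks that $\kappa$ is alternating, and it is plainly nondegenerate, in every characteristic; hence there is an $\mathbb{F}_q$-linear isomorphism $\psi=(\psi_1,\psi_2)\colon\mathbb{F}_{q^2}\to\mathbb{F}_q^2$ carrying $\kappa$ to the standard symplectic pairing. Applying $\psi$ coordinatewise and reordering gives an $\mathbb{F}_q$-linear isomorphism $\varphi\colon\mathbb{F}_{q^2}^n\to\mathbb{F}_q^{2n}$, $\varphi(x)=\big(\psi_1(x_1),\dots,\psi_1(x_n)\mid\psi_2(x_1),\dots,\psi_2(x_n)\big)$, with two properties: (a) the symplectic weight of $\varphi(x)$ equals $\wt(x)$, since $x_i=0$ if and only if $(\psi_1(x_i),\psi_2(x_i))=0$; and (b) $\langle\varphi(x),\varphi(y)\rangle_s$ depends only on the Hermitian inner product $\langle x,y\rangle_h:=\sum_i x_iy_i^q$ — explicitly, up to a sign absorbable into $\theta$, it equals $\mathrm{tr}_{\mathbb{F}_{q^2}/\mathbb{F}_p}(\theta^{-1}\langle x,y\rangle_h)$ — and in particular vanishes whenever $\langle x,y\rangle_h=0$.

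Finally I would take $D:=\varphi(C)$. It is $\mathbb{F}_q$-linear with $\dim_{\mathbb{F}_q}D=2(n-k)$; by (b) and the Hermitian self-orthogonality of $C$ it is symplectic self-orthogonal; and because $\varphi(E)\subseteq D^{\perp_s}$ by (b) while $\dim_{\mathbb{F}_q}\varphi(E)=2k=2n-\dim_{\mathbb{F}_q}D=\dim_{\mathbb{F}_q}D^{\perp_s}$, we get $D^{\perp_s}=\varphi(E)$. Feeding this into the first paragraph with $n-k_{\mathrm{quantum}}=\dim_{\mathbb{F}_q}D=2(n-k)$, i.e. $k_{\mathrm{quantum}}=2k-n$, yields an $[[n,2k-n,d']]_q$-code in which, by (a), $d'$ is the least Hamming weight occurring in $E\setminus E^{\perp_1}$, so $d'\geq d$; and whose stabilizer $D=\varphi(E^{\perp_1})$ has every nonzero word of symplectic weight at least $d$ — by (a) these weights are the Hamming weights of the nonzero words of $E^{\perp_1}\subseteq E$ — so the code is pure to $d$, as required. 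The crux is item (b): constructing $\varphi$ and verifying simultaneously that it preserves weights and converts Hermitian orthogonality into trace-symplectic orthogonality, with the characteristic-two case (where the trace-zero line of the extension $\mathbb{F}_{q^2}\,/\,\mathbb{F}_q$ is $\mathbb{F}_q$ itself) deserving a brief separate check. Everything else is dimension counting and the symplectic–stabilizer dictionary.
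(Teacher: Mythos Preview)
The paper does not give its own proof of this proposition: it simply quotes the result from \cite[Corollary~19]{kkk}. Your argument is a correct and self-contained reconstruction of the standard proof behind that citation---passing from Hermitian self-orthogonality over $\mathbb{F}_{q^2}$ to trace-symplectic self-orthogonality over $\mathbb{F}_q$ via an $\mathbb{F}_q$-linear isometry $\mathbb{F}_{q^2}\to\mathbb{F}_q^2$, and then invoking the stabilizer--symplectic dictionary. The construction of the alternating form $\kappa(u,v)=\mathrm{tr}_{\mathbb{F}_{q^2}/\mathbb{F}_q}(\theta^{-1}uv^{q})$, the verification that it is nondegenerate and alternating (including the characteristic-two check), the weight-preservation in item~(a), and the dimension count giving $D^{\perp_s}=\varphi(E)$ are all sound; so is the purity argument, since $E^{\perp_1}\subseteq E$ forces every nonzero word of $D$ to have symplectic weight at least $d$. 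In short, there is nothing to compare against in the paper itself, and your proof stands on its own.
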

As customary in classical coding theory, we prefer to use the standard Euclidean inner product to provide quantum codes. The symbol $\perp$ will be used to represent dual spaces with respect to that inner product. So, in this paper we will use the so-called CSS code construction after the papers \cite{20kkk} and \cite{95kkk}. We summarize the idea in the next two results, which can be found as Lemma 20 and Corollary 21 in \cite{kkk}.

\begin{teo}
Let $C_1$ and $C_2$ two linear error-correcting block codes with parameters $[n,k_1,d_1]$ and $[n,k_2,d_2]$ over the field  $\mathbb{F}_{q}$ and such that $C_2^{\perp} \subseteq C_1$.  Then, there exists an $[[n, k_1+k_2 -n, d]]_q$ stabilizer code with minimum distance $$d = \min \left\{\mathrm{wt}(c) | c \in (C_1 \setminus C_2^\perp) \cup (C_2 \setminus C_1^\perp) \right\},$$ which is pure to $\min \{d_1,d_2\}$.
\end{teo}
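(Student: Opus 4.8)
The plan is to route the statement through the standard additive/symplectic description of stabilizer codes and to exhibit the pair $(C_1,C_2)$ as a CSS-type symplectic code. Equip $\mathbb{F}_q^{2n}$ with the symplectic form $\langle (\mathbf{a}|\mathbf{b}),(\mathbf{a}'|\mathbf{b}')\rangle_{s}=\mathbf{a}\cdot\mathbf{b}'-\mathbf{a}'\cdot\mathbf{b}$, write $\perp_{s}$ for the associated duality, and define the symplectic weight $\mathrm{swt}(\mathbf{a}|\mathbf{b})$ as the number of coordinates $i$ with $(a_i,b_i)\neq(0,0)$. I would then use the general correspondence that the excerpt already attributes to \cite{kkk}: an $\mathbb{F}_q$-linear code $S\subseteq\mathbb{F}_q^{2n}$ with $S\subseteq S^{\perp_{s}}$ and $\dim_{\mathbb{F}_q}S=\ell$ yields an $[[n,n-\ell,d]]_q$ stabilizer code whose minimum distance is $d=\min\{\mathrm{swt}(v)\mid v\in S^{\perp_{s}}\setminus S\}$ and which is pure to $\min\{\mathrm{swt}(v)\mid v\in S\setminus\{0\}\}$. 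The candidate is $S:=C_1^{\perp}\times C_2^{\perp}$.

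First I would check that $S$ is symplectic self-orthogonal. For $(\mathbf{a}|\mathbf{b}),(\mathbf{a}'|\mathbf{b}')\in S$ the hypothesis $C_2^{\perp}\subseteq C_1$ gives $\mathbf{b},\mathbf{b}'\in C_1$, while $\mathbf{a},\mathbf{a}'\in C_1^{\perp}$, so $\mathbf{a}\cdot\mathbf{b}'=\mathbf{a}'\cdot\mathbf{b}=0$ and the symplectic product vanishes; hence $S\subseteq S^{\perp_{s}}$. Since $\dim_{\mathbb{F}_q}S=(n-k_1)+(n-k_2)$, the resulting quantum code has dimension $n-\big((n-k_1)+(n-k_2)\big)=k_1+k_2-n$, as required. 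A dimension count, using $C_1^{\perp}\subseteq C_2$ and $C_2^{\perp}\subseteq C_1$, then identifies $S^{\perp_{s}}=C_2\times C_1$.

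It remains to compute $d$ and to verify purity, and both are elementary once the symplectic picture is in place. By the displayed formula $d=\min\{\mathrm{swt}(\mathbf{u}|\mathbf{v})\mid(\mathbf{u}|\mathbf{v})\in(C_2\times C_1)\setminus(C_1^{\perp}\times C_2^{\perp})\}$, and I would prove this number equals $\min\{\mathrm{wt}(c)\mid c\in(C_1\setminus C_2^{\perp})\cup(C_2\setminus C_1^{\perp})\}$ by a two-sided comparison. In one direction, a word $c\in C_1\setminus C_2^{\perp}$ gives $(\mathbf{0}|c)\in(C_2\times C_1)\setminus(C_1^{\perp}\times C_2^{\perp})$ with $\mathrm{swt}(\mathbf{0}|c)=\mathrm{wt}(c)$, and symmetrically $(c|\mathbf{0})$ handles $c\in C_2\setminus C_1^{\perp}$. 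In the other direction, any $(\mathbf{u}|\mathbf{v})\in S^{\perp_{s}}\setminus S$ satisfies $\mathbf{u}\notin C_1^{\perp}$ or $\mathbf{v}\notin C_2^{\perp}$, so either $\mathbf{v}\in C_1\setminus C_2^{\perp}$ with $\mathrm{swt}(\mathbf{u}|\mathbf{v})\ge\mathrm{wt}(\mathbf{v})$, or $\mathbf{u}\in C_2\setminus C_1^{\perp}$ with $\mathrm{swt}(\mathbf{u}|\mathbf{v})\ge\mathrm{wt}(\mathbf{u})$. For purity, a nonzero $(\mathbf{a}|\mathbf{b})\in S$ has, by the hypotheses, $\mathbf{a}\in C_1^{\perp}\subseteq C_2$ and $\mathbf{b}\in C_2^{\perp}\subseteq C_1$; hence $\mathrm{swt}(\mathbf{a}|\mathbf{b})\ge\mathrm{wt}(\mathbf{a})\ge d_2$ if $\mathbf{a}\neq\mathbf{0}$, while $\mathrm{swt}(\mathbf{a}|\mathbf{b})=\mathrm{wt}(\mathbf{b})\ge d_1$ if $\mathbf{a}=\mathbf{0}$, so $S$ carries no nonzero vector of symplectic weight below $\min\{d_1,d_2\}$, i.e. the code is pure to $\min\{d_1,d_2\}$. (The degenerate case $k_1+k_2=n$, where $C_2^{\perp}=C_1$, $C_1^{\perp}=C_2$ and both sets in the distance formula are empty, produces the one-dimensional quantum code and is handled separately.)

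The only genuinely non-elementary ingredient is the general dictionary between $\mathbb{F}_q$-linear symplectic self-orthogonal codes and $\mathbb{F}_q$-stabilizer codes invoked at the start — in particular the translation of the quantum minimum distance and of purity into symplectic weights of $S^{\perp_{s}}\setminus S$ and $S$ respectively — which rests on the structure of the error group $G_n$, its maximal abelian subgroups and their centralizers, and the Knill--Laflamme condition; I would simply quote this from the literature, exactly as the excerpt already does. Everything specific to the CSS situation is then the linear algebra over $\mathbb{F}_q$ sketched above, so the main (really the only) obstacle is that black box rather than any step of the argument proper.
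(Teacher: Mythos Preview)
Your proposal is correct and is exactly the route the paper itself points to: the paper does not give its own proof of this theorem but simply quotes it from \cite{kkk} and remarks that ``the additive code $C_1^\perp \times C_2^\perp$ is used'', which is precisely your $S$. The symplectic self-orthogonality, the identification $S^{\perp_s}=C_2\times C_1$, and the weight computations you sketch are all standard and correct, so there is nothing to add or compare.
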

For an explanation of how to get the codes, we refer to \cite[Theorem 13]{kkk} where the additive code $C_1^\perp \times C_2^\perp$ is used.
\begin{cor} \label{bueno}
Let $C$ be a linear $[n,k,d]$ error-correcting block code over  $\mathbb{F}_{q}$ such that $C^\perp \subseteq C$. Then, there exists an $[[n, 2k -n, \geq d]]_q$ stabilizer code which is pure to $d$.
\end{cor}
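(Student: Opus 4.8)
The plan is to obtain this statement as the special case $C_1 = C_2 = C$ of the preceding theorem, so very little work is needed. First I would verify the hypothesis of that theorem: taking $C_1 = C_2 = C$, the required containment $C_2^{\perp} \subseteq C_1$ becomes $C^{\perp} \subseteq C$, which is precisely the assumption. Hence the theorem applies and produces a stabilizer code with parameters $[[n, k_1 + k_2 - n, d']]_q = [[n, 2k - n, d']]_q$, pure to $\min\{d_1, d_2\} = \min\{d, d\} = d$, where
$$d' = \min \left\{ \mathrm{wt}(c) \mid c \in (C \setminus C^{\perp}) \cup (C \setminus C^{\perp}) \right\} = \min \left\{ \mathrm{wt}(c) \mid c \in C \setminus C^{\perp} \right\}.$$

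The only remaining point is the inequality $d' \geq d$. Since $0 \in C^{\perp}$, we have the inclusion $C \setminus C^{\perp} \subseteq C \setminus \{0\}$, so every codeword entering the minimum that defines $d'$ is a nonzero word of $C$ and therefore has weight at least the minimum distance $d$ of $C$; minimizing over the subset $C \setminus C^{\perp}$ can only leave the value unchanged or increase it, so $d' \geq d$. This yields the asserted $[[n, 2k - n, \geq d]]_q$ stabilizer code, pure to $d$.

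The main obstacle, such as it is, is purely bookkeeping: one must keep in mind that the theorem gives the exact distance $d'$ of the CSS code (which may be strictly larger than $d$) while the corollary only claims the lower bound $d' \geq d$. A secondary subtlety is the degenerate self-dual case $C = C^{\perp}$, where $C \setminus C^{\perp} = \emptyset$ and $2k - n = 0$; there the bound holds vacuously and the construction returns the trivial one-dimensional code, so the statement remains literally true. If one prefers, this case can be excluded by assuming $C^{\perp} \subsetneq C$, which is the situation of practical interest.
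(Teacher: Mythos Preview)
Your proposal is correct and follows exactly the intended route: the paper does not write out a proof of this corollary but simply cites it (together with the preceding theorem) from \cite{kkk}, and the standard derivation is precisely to set $C_1=C_2=C$ in that theorem, as you do. Your handling of the distance bound and of the degenerate self-dual case is also accurate.
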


Quantum codes admit bounds on their parameters as the quantum singleton or the Hamming bound \cite{80kkk, ash, opt, feng, kkk}, which give necessary conditions for the existence of quantum codes. With the same philosophy of the classical Gilbert-Varshamov bound, a sufficient condition for the existence of pure stabilizer codes is given by Feng and Ma in \cite{feng}. Assume $n > 2$ and $d \geq 2$. When $k \geq 2$ and $n \equiv k \; (\mathrm{mod} \;2)$, that condition is
\[
 \sum_{i=1}^{d-1} (q^2 -1)^{i-1} \binom{n}{i}  < \frac{q^{n-k+2}-1}{q^2-1}.
\]
In case $n$ odd and $k=1$, the condition is $q^n +1 > \sum_{i=1}^{d-1} \binom{n}{i} [q (q^2-1)^{i-1} + (-1)^{i+1} (q + 1)^{i-1}]$ and there exists a similar formula for the case $n$ even and $k=0$.

The literature contains many quantum codes derived from classical ones. We only cite \cite{BE, kkk} and some  other recent papers  supported on BCH and quasi-cyclic codes \cite{guar1, cyclic,cyclic2,guar14, guar}. In  this paper, we consider affine variety codes to get stabilizer codes through the CSS code construction. We use Corollary \ref{bueno} for this purpose. The direct application of this procedure yields, in general,  codes whose supporting field can be large.
Subfield-subcodes \cite{delsarte, Stich, Fer, Fer1} are used to decrease it and we consider this type of codes for providing stabilizer  codes with good parameters. Our goal consists of finding  suitable affine variety codes over a field $\mathbb{F}_{p^r}$  that produce stabilizer codes over  $\mathbb{F}_{p^s}$ for some $s$ that divides $r$. This gives rise to better parameters since one can use codes over large fields to provide stabilizer codes over smaller finite fields. In this way, we get several examples of stabilizer codes improving some of those given in \cite{edel} and \cite{guar}. In addition, using our procedure, we are able to obtain some stabilizer codes that do not satisfy the above mentioned analogue to the classical Gilbert-Varshamov sufficient condition.


It is also worth to mention that affine variety codes  are, in some cases, a particular case of multivariable abelian codes. These codes have been studied in \cite{Edgar}, where for the case $q=4$, self-orthogonal and self-dual codes with respect to the trace inner product are characterized.

Section \ref{afin} of the paper introduces affine variety codes over a finite field $\mathbb{F}_{p^r}$, where $p$ is a prime number and $r$ a positive integer. The codes are given by evaluating the polynomials in several variables belonging to vector spaces generated by monomials whose exponents are in some subsets $U$ of the cartesian product
$$\{0, 1, \ldots,N_1-1\}\times \{0, 1, \ldots,N_2-1\} \times\cdots\times\{0, 1, \ldots,N_m-1\},$$ where the integer $N_i$ divides $p^r -1$ for all $i$, $1 \leq i \leq m$. We also devote Section \ref{afin} to show that good choices of sets $U$ give rise to stabilizer codes over $\mathbb{F}_{p^r}$. However, the supporting field of these codes is, in general, large. To reduce the size of the field, in Section \ref{dos} we introduce and study subfield-subcodes of our affine variety codes providing, in Theorem \ref{base}, a basis for the vector space of polynomials associated with affine variety codes but evaluating to a subfield $\mathbb{F}_{p^s}$ of $\mathbb{F}_{p^r}$.
Dual codes of the mentioned subfield-subcodes are treated in Section \ref{tres}. They are useful for proving our main result, Theorem \ref{main}, where we give conditions on the above sets $U$ for obtaining good stabilizer codes and parameters for them. Finally, Section \ref{cuatro} shows parameters of the above mentioned stabilizer codes constructed with our procedure. These codes improve some of the quantum codes included in \cite{edel} and \cite[Tables I and II]{guar}, and several of them exceed the Gilbert-Varshamov bound.

\section{Affine variety codes}  \label{afin}

We devote this section to introduce the class of classical error-correcting codes that we will use to yield stabilizer codes via the CSS code construction. Consider a finite field $\mathbb{F}_{p^r}$ where $r$ is a positive integer. Set $\mathbb{F}_{p^{r}}[X_1, X_2, \ldots, X_m]$ the ring of polynomials in $m$ variables over  the field $\mathbb{F}_{p^{r}}$ and pick $m$ positive integer numbers $N_1, N_2, \ldots, N_m$ such that $N_i\mid p^r-1$ for $i=1, 2, \ldots, m$. Consider the ideal of $\mathbb{F}_{p^{r}}[X_1, X_2, \ldots, X_m]$ generated by the set of polynomials $\{X_1^{N_1}-1, X_2^{N_2}-1,  \ldots, X_m^{N_m}-1\}$, which will be denoted by $I$. Let $Z(I)$ be the set of zeroes of $I$, its cardinality is $n:=\mathrm{card} (Z(I))$ and we set $Z(I) = \{P_1, P_2, \ldots,P_n\}$. Now, write $R:=\mathbb{F}_{p^{r}}[X_1, X_2, \ldots, X_m]/I$ and consider the evaluation map $\mathrm{ev}:  R \rightarrow \mathbb{F}_{p^{r}}^n$ which maps any function $f\in R$ to $\mathrm{ev}(f)=(f(P_1), f(P_2), \ldots,f(P_n))$, where $f$ also denotes any representative of its class.

The family of codes that we are going to define will be determined by certain linear subspaces of $R$. Since the polynomials $X_i^{N_i} -1$ have no multiple roots, our codes are of semi-simple type \cite{Edgar1}.

It is worthwhile to mention that $G=\{X_1^{N_1}-1, X_2^{N_2}-1,  \ldots, X^{N_m}-1\}$ is a Gr\"{o}bner basis of the ideal $I$ with respect to (say, some fixed) lexicographical ordering. Hence, we can choose a {\it canonical representative} of each class given by a polynomial $f$ which will be its reduction module $G$. Frequently, we will use the same notation for expressing a class in $R$ and its canonical representative. The following result will be useful.

\begin{pro}\label{te:isomorphims}
The above introduced evaluation map, {\rm ev}, is an isomorphism of $\mathbb{F}_{p^r}$-vector spaces.
\end{pro}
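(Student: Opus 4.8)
The plan is to show that the evaluation map $\mathrm{ev}: R \to \mathbb{F}_{p^r}^n$ is a bijective $\mathbb{F}_{p^r}$-linear map; linearity is immediate from the definition of polynomial evaluation, so the work is in the bijectivity. Since $R = \mathbb{F}_{p^r}[X_1,\ldots,X_m]/I$ with $I = \langle X_1^{N_1}-1,\ldots,X_m^{N_m}-1\rangle$, and since $G$ is the stated Gr\"obner basis, every class in $R$ has a unique canonical representative whose degree in $X_i$ is at most $N_i - 1$ for each $i$. Hence $\dim_{\mathbb{F}_{p^r}} R = N_1 N_2 \cdots N_m$, a basis being the monomials $X_1^{a_1}\cdots X_m^{a_m}$ with $0 \le a_i \le N_i - 1$. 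The first step, therefore, is to compute $n = \mathrm{card}(Z(I))$ and check it equals $N_1 \cdots N_m$.

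For that computation I would argue as follows. Because $N_i \mid p^r - 1$, the polynomial $X_i^{N_i} - 1$ splits into $N_i$ distinct roots in $\mathbb{F}_{p^r}$: indeed, the multiplicative group $\mathbb{F}_{p^r}^*$ is cyclic of order $p^r - 1$, so it contains a unique subgroup $\mu_{N_i}$ of order $N_i$, and these are exactly the roots. A point $P = (b_1,\ldots,b_m) \in \overline{\mathbb{F}}_{p^r}^m$ lies in $Z(I)$ iff $b_i^{N_i} = 1$ for every $i$, i.e.\ iff $b_i \in \mu_{N_i} \subseteq \mathbb{F}_{p^r}$ for every $i$; thus $Z(I) = \mu_{N_1} \times \cdots \times \mu_{N_m}$ and $n = N_1 N_2 \cdots N_m = \dim_{\mathbb{F}_{p^r}} R$. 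In particular $Z(I) \subseteq \mathbb{F}_{p^r}^m$, so the evaluation map indeed lands in $\mathbb{F}_{p^r}^n$ as asserted.

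Once the dimensions match, it suffices to prove injectivity (or surjectivity). For injectivity, suppose $f \in R$ is represented by its canonical form with $\deg_{X_i} f \le N_i - 1$ and $\mathrm{ev}(f) = 0$, i.e.\ $f$ vanishes at every point of $\mu_{N_1}\times\cdots\times\mu_{N_m}$. I would run an induction on the number of variables $m$: fixing the values of $X_2,\ldots,X_m$ in $\mu_{N_2}\times\cdots\times\mu_{N_m}$, the resulting univariate polynomial in $X_1$ has degree $\le N_1 - 1$ but $N_1$ distinct roots, hence is identically zero; this forces each coefficient (a polynomial in $X_2,\ldots,X_m$ of degree $\le N_j - 1$ in $X_j$) to vanish on $\mu_{N_2}\times\cdots\times\mu_{N_m}$, and the inductive hypothesis gives $f = 0$. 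Therefore $\ker(\mathrm{ev}) = 0$, and a dimension count finishes the proof.

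The only mildly delicate point is bookkeeping in the induction — making sure that ``vanishing on the product set'' correctly propagates to the coefficients of the top variable. This is routine: the key fact used repeatedly is that a nonzero univariate polynomial over a field (here $\mathbb{F}_{p^r}$, or an intermediate polynomial ring, which is an integral domain) of degree $< N$ cannot have $N$ distinct roots. No genuine obstacle arises; the proposition is essentially the statement that $R$, being a product of $n$ copies of $\mathbb{F}_{p^r}$ via the Chinese Remainder Theorem applied to the distinct maximal ideals $\langle X_1 - b_1, \ldots, X_m - b_m\rangle$ for $P=(b_1,\ldots,b_m)\in Z(I)$, is evaluated isomorphically onto $\mathbb{F}_{p^r}^n$.
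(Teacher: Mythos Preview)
Your argument is correct. The paper's own proof is a single sentence: it observes that $R$ and $\mathbb{F}_{p^r}^n$ have the same cardinality and that $\mathrm{ev}$ is surjective. You follow the same overall template (equal dimension plus one half of bijectivity), but you choose the other half: you verify injectivity via the standard induction on the number of variables, using that a univariate polynomial of degree $<N_i$ cannot have $N_i$ distinct roots. The paper instead asserts surjectivity, implicitly relying on interpolation at the $n$ points of $Z(I)$. Both routes are valid and equally short in spirit; your version has the minor advantage of being fully self-contained (you actually prove the relevant half), while the paper's version leaves the surjectivity claim to the reader.
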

\begin{proof}
It follows from the fact that $R$ and $\mathbb{F}_{p^{r}}^n$ have the same cardinality and the evaluation map is surjective.
\end{proof}


Throughout this paper, $\mathcal{H}$ will denote the hypercube $$\mathcal{H} := \{0, 1, \ldots,N_1-1\}\times \{0, 1, \ldots,N_2-1\} \times\cdots\times\{0, 1, \ldots,N_m-1\}$$
and our codes will be defined by suitable subsets $U$ of $\mathcal{H}$. Each $U$ gives rise to the set $\{X_1^{u_1} X_2^{u_2}\cdots X_m^{u_m} | (u_1,u_2,\ldots,u_m)\in U\}$ of monomials in $\mathbb{F}_{p^{r}}[X_1, X_2, \ldots, X_m]$. This set provides elements in $R$ which generate a vector space over $\mathbb{F}_{p^{r}}$ that will be denoted $\mathbb{F}_{p^{r}}^U$. In particular $R=\mathbb{F}_{p^{r}}^\mathcal{H}$ as vector spaces. Now, we introduce the concept of affine variety code. For a more general definition see \cite{FL} or \cite{Geil-Affine}.

\begin{de}
Let $U$ be a set as above. The image of the restriction of the evaluation map $\mathrm{ev}$ to $\mathbb{F}_{p^{r}}^U$ is denoted by $C_{U}$ and constitutes the {\it affine variety code associated to $U$ over $\mathbb{F}_{p^{r}}$}.
\end{de}

As a consequence of the previous proposition, it holds that the restriction map $
\mathrm{ev}: \mathbb{F}_{p^{r}}^U \rightarrow \mathbb{F}_{p^{r}}^n
$ is injective and therefore $\dim(C_U)= \mathrm{card} (U)$.

Consider an element $\mathbf{u} =(u_1,u_2, \ldots, u_n) \in \mathcal{H}$ and set $\hat{\mathbf{u}} \in \mathcal{H}$ the element $\hat{\mathbf{u}} : =(\hat{u}_1,\hat{u}_2, \ldots, \hat{u}_n)$ defined by $\hat{u}_i= 0$ if $u_i=0$ and $\hat{u}_i=N_i-u_i$ otherwise. Next result generalizes a close one  for toric codes. It can be found in \cite{maria-michael} and \cite{diego}.

\begin{pro}\label{DualGTC}
Let $C_U$ be the affine variety code defined by $U\subset \mathcal{H}$, then $C_U^{\perp}$ is the affine variety code
defined by $U^{\perp}=  \mathcal{H} \setminus \{\hat{\mathbf{u}} | \mathbf{u} \in U\}$ .
\end{pro}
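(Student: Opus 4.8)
The plan is to identify $C_U^\perp$ by a dimension count combined with an orthogonality computation, exploiting the fact that the monomials $X_1^{u_1}\cdots X_m^{u_m}$ for $\mathbf{u}\in\mathcal{H}$ evaluate to a basis of $\mathbb{F}_{p^r}^n$ via Proposition \ref{te:isomorphims}. First I would note that, since $\dim C_U = \operatorname{card}(U)$ and $\operatorname{card}(U^\perp) = \operatorname{card}(\mathcal{H}) - \operatorname{card}(U) = n - \dim C_U = \dim C_U^\perp$, it suffices to show that $C_{U^\perp} \subseteq C_U^\perp$; the equality then follows by dimensions. Equivalently, I must show that for every $\mathbf{u}\in U$ and every $\mathbf{w}\in \mathcal{H}$ with $\mathbf{w}\neq \hat{\mathbf{u}}'$ for all $\mathbf{u}'\in U$ — in particular for $\mathbf{w}$ ranging over $U^\perp$ — the evaluation vectors $\mathrm{ev}(X^{\mathbf{u}})$ and $\mathrm{ev}(X^{\mathbf{w}})$ are orthogonal under the Euclidean inner product.

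The heart of the matter is therefore the single computation of the inner product $\langle \mathrm{ev}(X^{\mathbf{u}}), \mathrm{ev}(X^{\mathbf{w}})\rangle = \sum_{P\in Z(I)} P^{\mathbf{u}+\mathbf{w}}$, where $P^{\mathbf{u}+\mathbf{w}}$ denotes the product of the coordinates of $P$ raised to the exponents $u_i+w_i$. Since $Z(I)$ is the full cartesian product of the sets of $N_i$-th roots of unity in $\mathbb{F}_{p^r}$ (these exist and are distinct because $N_i\mid p^r-1$), this sum factors as $\prod_{i=1}^m \left(\sum_{\zeta^{N_i}=1} \zeta^{u_i+w_i}\right)$. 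Each factor is a character-sum over the cyclic group of $N_i$-th roots of unity, hence equals $N_i$ (which is nonzero in $\mathbb{F}_{p^r}$ since $N_i\mid p^r-1$) when $u_i+w_i\equiv 0 \pmod{N_i}$ and equals $0$ otherwise. Thus the whole inner product vanishes unless $u_i+w_i\equiv 0\pmod{N_i}$ for every $i$, i.e.\ unless $w_i = \hat{u}_i$ for every $i$ (using $0\le u_i,w_i\le N_i-1$: if $u_i=0$ then $w_i=0=\hat u_i$, and if $u_i\neq 0$ then $w_i = N_i-u_i = \hat u_i$). Hence $\mathrm{ev}(X^{\mathbf{u}})\perp \mathrm{ev}(X^{\mathbf{w}})$ whenever $\mathbf{w}\neq \hat{\mathbf{u}}$, and a fortiori whenever $\mathbf{w}\in U^\perp$, since $U^\perp$ avoids all $\hat{\mathbf{u}}$ with $\mathbf{u}\in U$. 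This gives $C_{U^\perp}\subseteq C_U^\perp$.

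Combining the inclusion with the equality of dimensions established at the outset yields $C_U^\perp = C_{U^\perp}$, completing the proof. The only genuinely non-routine point is the factorization of the inner-product sum over $Z(I)$ as a product of cyclic character sums and the observation that the "diagonal" term $N_i$ is invertible in the field; everything else is bookkeeping with the involution $\mathbf{u}\mapsto\hat{\mathbf{u}}$ on $\mathcal{H}$ and a comparison of cardinalities. I would expect the write-up to be short, with the main care needed in handling the $u_i=0$ case of the definition of $\hat{u}_i$ correctly so that the condition $u_i+w_i\equiv 0\pmod{N_i}$ on $\mathcal{H}$ matches exactly the relation $\mathbf{w}=\hat{\mathbf{u}}$.
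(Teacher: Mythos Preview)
Your proposal is correct. The paper does not actually supply a proof of Proposition~\ref{DualGTC}; it simply states the result and refers the reader to \cite{maria-michael} and \cite{diego}, so there is no in-paper argument to compare against in detail. The method you outline --- computing $\langle \mathrm{ev}(X^{\mathbf{u}}),\mathrm{ev}(X^{\mathbf{w}})\rangle$ by factoring the sum over $Z(I)$ into character sums over the $N_i$-th roots of unity, then combining the resulting orthogonality with a dimension count --- is precisely the standard argument used in the toric/evaluation-code literature (including the cited references), so your approach is in line with what the paper is implicitly invoking. Your handling of the $u_i=0$ case and the observation that $N_i$ is invertible in $\mathbb{F}_{p^r}$ because $N_i\mid p^r-1$ are exactly the points that need care, and you treat them correctly.
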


The following result will be useful for obtaining stabilizer (quantum) codes.

\begin{teo}
With the above notations, the inclusion $C_U\subset C_{U}^{\perp}$ happens if, and only if,  $\hat{\mathbf{u}}\notin U$ for all $\mathbf{u} \in U$.
\end{teo}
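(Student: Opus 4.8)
The plan is to reduce the statement to the explicit description of the dual code provided by Proposition \ref{DualGTC}, together with the fact that an affine variety code is faithfully recorded by its defining subset of $\mathcal{H}$. The first preliminary step I would carry out is to record the elementary but essential observation that the map $\mathbf{u} \mapsto \hat{\mathbf{u}}$ is an involution on $\mathcal{H}$: if $u_i = 0$ then $\hat{u}_i = 0$ and hence $\hat{\hat{u}}_i = 0$, whereas if $u_i \neq 0$ then $0 < u_i < N_i$, so $\hat{u}_i = N_i - u_i$ is again nonzero and $\hat{\hat{u}}_i = N_i - (N_i - u_i) = u_i$; thus $\hat{\hat{\mathbf{u}}} = \mathbf{u}$ for every $\mathbf{u} \in \mathcal{H}$.

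The second preliminary step is to note that, because $\mathrm{ev}$ is an isomorphism of $\mathbb{F}_{p^r}$-vector spaces (Proposition \ref{te:isomorphims}) and the classes of the monomials $X_1^{u_1} \cdots X_m^{u_m}$ with $\mathbf{u} \in \mathcal{H}$ form a basis of $R$, one has, for subsets $U_1, U_2 \subseteq \mathcal{H}$, the chain of equivalences $C_{U_1} \subseteq C_{U_2} \iff \mathbb{F}_{p^r}^{U_1} \subseteq \mathbb{F}_{p^r}^{U_2} \iff U_1 \subseteq U_2$. Applying this with $U_1 = U$ and, by Proposition \ref{DualGTC}, $U_2 = U^\perp = \mathcal{H} \setminus \{\hat{\mathbf{u}} \mid \mathbf{u} \in U\}$, the claimed inclusion $C_U \subseteq C_U^\perp$ becomes equivalent to the set-theoretic inclusion $U \subseteq \mathcal{H} \setminus \{\hat{\mathbf{u}} \mid \mathbf{u} \in U\}$.

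It then remains to show that this last inclusion is equivalent to the condition "$\hat{\mathbf{u}} \notin U$ for all $\mathbf{u} \in U$". By definition the inclusion says exactly that no element of $U$ lies in $\{\hat{\mathbf{u}} \mid \mathbf{u} \in U\}$, i.e. that for every $\mathbf{v} \in U$ one has $\mathbf{v} \neq \hat{\mathbf{u}}$ for all $\mathbf{u} \in U$. Here the involution property closes the loop: assuming this, if some $\mathbf{u} \in U$ had $\hat{\mathbf{u}} \in U$, then taking $\mathbf{v} := \hat{\mathbf{u}} \in U$ we would have $\hat{\mathbf{v}} = \hat{\hat{\mathbf{u}}} = \mathbf{u} \in U$, contradicting the displayed condition; conversely, if $\hat{\mathbf{u}} \notin U$ for all $\mathbf{u}\in U$ and yet some $\mathbf{v} \in U$ equalled $\hat{\mathbf{w}}$ with $\mathbf{w} \in U$, then $\hat{\mathbf{v}} = \mathbf{w} \in U$ would again be a contradiction.

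There is no serious obstacle in this argument; the only point that requires a moment's attention is the logical manipulation in the final paragraph, where the involution property of $\hat{\,\cdot\,}$ is precisely what upgrades the one-directional reading "no $\mathbf{v}\in U$ is of the form $\hat{\mathbf{u}}$" to the symmetric statement "$\hat{\mathbf{u}}\notin U$ for all $\mathbf{u}\in U$". Everything else is a direct invocation of Propositions \ref{te:isomorphims} and \ref{DualGTC}.
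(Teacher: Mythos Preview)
Your proof is correct and follows essentially the same route as the paper's: invoke Proposition~\ref{DualGTC} to identify $C_U^\perp$ with $C_{U^\perp}$, use the injectivity of $\mathrm{ev}$ to pass to the set-theoretic inclusion $U\subseteq U^\perp$, and unwind the definition of $U^\perp$. Your treatment is more detailed than the paper's one-line argument; in particular, the involution property you record is true but not actually needed for the final equivalence, since ``no $\mathbf v\in U$ equals some $\hat{\mathbf u}$ with $\mathbf u\in U$'' and ``$\hat{\mathbf u}\notin U$ for all $\mathbf u\in U$'' are tautologically the same statement.
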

\begin{proof}
Proposition \ref{DualGTC} shows that the code $C_{U}^{\perp}$ is the evaluation by $\mathrm{ev}$ of
$\mathbb{F}_{p^{r}}^{U^{\perp}}$, where  $U^{\perp}=  \mathcal{H}\setminus \{\hat{\mathbf{u}} | \mathbf{u} \in U\}$, hence $U\subset U^{\perp}$ if and only if $\hat{\mathbf{u}}\notin U$ for all $\mathbf{u} \in U$, what concludes the proof.
\end{proof}

By using the CSS code construction, one can deduce the following result.

\begin{cor}
\label{afines}
Let $N_1, N_2, \ldots, N_m$ be positive integers such that $N_i$ divides $p^r -1$ for all index $i$, $p$ being a prime number and $r$ a positive integer. Let $U$ be a nonempty subset of the hypercube $\mathcal{H}$ satisfying that $\hat{\mathbf{u}}\notin U$ for all $\mathbf{u} \in U$. Then, from the affine variety code $C_U$, a stabilizer code can be obtained. The parameters for this code are $[[n,k,\geq d]]_{p^{r}}$, where $n=N_1 N_2 \cdots N_m$, $k=n-2 \; \mathrm{card}(U)$ and $d=d(C_U^{\perp})$.
\end{cor}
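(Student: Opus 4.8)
The plan is to combine the three ingredients that have already been assembled in the excerpt: the previous theorem characterizing when $C_U\subseteq C_U^\perp$, Corollary \ref{bueno} (the CSS construction in the self-orthogonal-to-dual form), and the dimension count $\dim(C_U)=\mathrm{card}(U)$ coming from Proposition \ref{te:isomorphims}. First I would set $C:=C_U^\perp$. By the preceding theorem, the hypothesis $\hat{\mathbf u}\notin U$ for all $\mathbf u\in U$ is exactly the statement $C_U\subseteq C_U^\perp$, and dualizing this inclusion (recall $(C_U^\perp)^\perp=C_U$, which is immediate from Proposition \ref{DualGTC} since $(U^\perp)^\perp=U$) gives $C^\perp=C_U\subseteq C_U^\perp=C$. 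Thus $C$ is a linear code over $\mathbb{F}_{p^r}$ with $C^\perp\subseteq C$, which is precisely the hypothesis of Corollary \ref{bueno}.

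Next I would pin down the length and dimension. The length is $n=\mathrm{card}(Z(I))$; since the polynomials $X_i^{N_i}-1$ have $N_i$ distinct roots in $\mathbb{F}_{p^r}$ (this is where $N_i\mid p^r-1$ is used), the zero set $Z(I)$ is the full Cartesian product of these root sets, so $n=N_1N_2\cdots N_m$. For the dimension, $\dim(C)=\dim(C_U^\perp)=n-\mathrm{card}(U)$ by Proposition \ref{DualGTC} together with $\dim(C_U)=\mathrm{card}(U)$. Writing $k':=\dim(C)=n-\mathrm{card}(U)$ and $d':=d(C_U^\perp)$ for its minimum distance, Corollary \ref{bueno} produces an $[[n,\,2k'-n,\,\geq d']]_{p^r}$ stabilizer code, pure to $d'$. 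Finally I would simplify the middle parameter: $2k'-n=2(n-\mathrm{card}(U))-n=n-2\,\mathrm{card}(U)=k$, which matches the claimed value, and $d=d'=d(C_U^\perp)$ as stated.

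I do not expect any genuine obstacle here: the statement is essentially a bookkeeping corollary that repackages the previous theorem and Corollary \ref{bueno}. The only point requiring a little care is the self-duality bookkeeping — making sure one applies the CSS corollary to $C_U^\perp$ rather than to $C_U$, and correctly verifying $(C_U^\perp)^\perp = C_U$ and hence $(C_U^\perp)^\perp\subseteq C_U^\perp$, i.e. the containment is in the right direction. One should also note that the hypothesis $U\neq\emptyset$ guarantees $C$ is a nonzero code (indeed $C_U^\perp\supseteq C_U\neq 0$), so the resulting subspace of $\mathbb{C}^{q^n}$ is a legitimate stabilizer code. Once these checks are in place the proof is just the substitution $k=n-2\,\mathrm{card}(U)$ and $d=d(C_U^\perp)$ into the conclusion of Corollary \ref{bueno}.
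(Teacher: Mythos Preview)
Your proposal is correct and follows exactly the approach the paper intends: the paper does not spell out a proof but simply notes that the corollary is obtained ``by using the CSS code construction,'' i.e.\ by combining the preceding theorem (which gives $C_U\subseteq C_U^\perp$ under the stated hypothesis) with Corollary~\ref{bueno} applied to $C=C_U^\perp$. Your parameter computation $2(n-\mathrm{card}(U))-n=n-2\,\mathrm{card}(U)$ and identification $d=d(C_U^\perp)$ are exactly what is needed; the remark that $(C_U^\perp)^\perp=C_U$ is standard for the Euclidean inner product on $\mathbb{F}_{p^r}^n$ and does not require Proposition~\ref{DualGTC}.
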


\begin{exa}
\label{luno}
{\rm With the above notation, set $R$ the ring $\mathbb{F}_{4}[x,y,z]/<x^3-1,y^3-1,z^3-1>$. This means that the corresponding codes will have length  $n=27$. Our set $U$ will be
\[
U=\left\{ (0, 0, 1),( 1, 1, 0 ),(0, 1, 1), (1, 1, 1), (1, 2, 0) (1, 0, 2 ), (0, 1, 2 ),(1, 1, 2), (1, 2, 1), (2, 1, 1) \right\}.
\]
Since $U$ satisfies the conditions in Corollary \ref{afines}, $U$ yields a $[[27, 7, 6]]_{4}$ stabilizer code. The distance has been computed with the computational algebra system Magma \cite{magma}.}
\end{exa}
\begin{exa}
{\rm Let us show other example. Consider the ring $R=\mathbb{F}_{8}[x,y]/<x^7-1,y^7-1>$ and the set
$
U= \left\{ (1, 3),( 4, 4 ),(1, 6 ),(5, 0), (1, 2 ) \right\}.
$
This gives a $[[49,39,4]]_{8}$ stabilizer code.}
\end{exa}

Codes  in the above examples are only samples. Next section is devoted to show that new codes with algebraic structure and good parameters can be obtained by considering subfield-subcodes.



\section{Subfield-Subcodes of Affine variety Codes} \label{dos}

As above, we write $R= \mathbb{F}_{p^{r}}[X_1, X_2, \dots, X_m]/I$, where $I$ is the ideal of the polynomial ring $\mathbb{F}_{p^{r}}[X_1, X_2, \dots, X_m]$ generated by the set of polynomials $\{X_1^{N_1}-1, X_2^{N_2}-1,  \ldots, X^{N_m}-1\}$. In addition, we consider a positive integer $s$ which divides $r$. We say that an element $f  \in R$ evaluates to $\mathbb{F}_{p^{s}}$ whenever $f(\boldsymbol{\alpha}) \in \mathbb{F}_{p^s}$ for all $\boldsymbol{\alpha}  \in Z(I)$. Now, define $\mathcal{T}: R\rightarrow R$ the map given by $\mathcal{T}(f) = f + f^{p^s} + \cdots + f^{p^{s(\frac{r}{s}-1)}}$. Also, set $\mathrm{tr}_r^s: \mathbb{F}_{p^r} \rightarrow \mathbb{F}_{p^s}$ given by $\mathrm{tr}_r^s(x)= x + x^{p^s} + \cdots + x^{p^{s(\frac{r}{s}-1)}}$ and extend it to $\mathbf{tr}: \mathbb{F}_{p^{r}}^n \rightarrow \mathbb{F}_{p^{s}}^n$ by applying $\mathrm{tr}_r^s$  coordinatewise. Afterwards, we will need the following results.

\begin{pro}\label{p:Tmap}
With the above notations and for any $f\in R$, it holds:
\begin{enumerate}
    \item $\mathcal{T}(af) = a \mathcal{T}(f)$, for all element $a\in \F_{p^s}$.
    \item $ \mathcal{T}(f)^{p^s} = \mathcal{T}(f^{p^s}) = \mathcal{T}(f)$.
    \item $\mathrm{ev}(\mathcal{T}(f)) = \mathbf{tr}(\mathrm{ev}(f))$.
    \item $\mathrm{ev}(\mathcal{T}(f)) = \mathbf{0}$ happens if, and only if, $\mathcal{T}(f) = 0$.
\end{enumerate}
\end{pro}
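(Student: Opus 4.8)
The plan is to verify the four assertions essentially by direct computation, exploiting the two basic facts that underlie everything: $a^{p^s}=a$ for $a\in\F_{p^s}$, and the evaluation map $\mathrm{ev}$ is a ring homomorphism (in particular $\mathrm{ev}(f^{p^s})=\mathrm{ev}(f)^{p^s}$, coordinatewise $p^s$-th powers) as well as an $\F_{p^r}$-vector space isomorphism by Proposition \ref{te:isomorphims}.

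For (1), I would write out $\mathcal{T}(af)=\sum_{i=0}^{r/s-1}(af)^{p^{si}}=\sum_{i=0}^{r/s-1}a^{p^{si}}f^{p^{si}}$ and note that since $a\in\F_{p^s}$ we have $a^{p^s}=a$, hence $a^{p^{si}}=a$ for every $i$; factoring out $a$ gives $a\,\mathcal{T}(f)$. For (2), the identity $\mathcal{T}(f)^{p^s}=\mathcal{T}(f^{p^s})$ is just the additivity of the Frobenius $x\mapsto x^{p^s}$ on $R$ (a ring of characteristic $p$), applied termwise to the sum defining $\mathcal{T}(f)$; and $\mathcal{T}(f^{p^s})=\mathcal{T}(f)$ because raising each term $f^{p^{si}}$ to the $p^s$ shifts the index $i\mapsto i+1$, and $f^{p^{s(r/s)}}=f^{p^r}=f$ in $R$ since every point of $Z(I)$ has coordinates that are $N_j$-th roots of unity in $\F_{p^r}$, so the exponents wrap around modulo $r/s$ and the sum is unchanged — more carefully, one should justify that $f^{p^r}=f$ holds in $R$, which follows because $\mathrm{ev}$ is injective and $\mathrm{ev}(f^{p^r})=\mathrm{ev}(f)^{p^r}=\mathrm{ev}(f)$, the last equality since every coordinate lies in $\F_{p^r}$.

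For (3), again use that $\mathrm{ev}$ commutes with the Frobenius coordinatewise: $\mathrm{ev}(\mathcal{T}(f))=\sum_{i=0}^{r/s-1}\mathrm{ev}(f)^{p^{si}}=\mathbf{tr}(\mathrm{ev}(f))$ by the very definition of $\mathbf{tr}$ as coordinatewise $\mathrm{tr}_r^s$. For (4), one direction is trivial ($\mathcal{T}(f)=0\Rightarrow\mathrm{ev}(\mathcal{T}(f))=\mathbf{0}$); the converse is exactly the injectivity of $\mathrm{ev}$ from Proposition \ref{te:isomorphims}, applied to the element $\mathcal{T}(f)\in R$.

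The only genuinely delicate point — and the one I'd expect to be the main obstacle, or at least the step deserving explicit care — is the claim $f^{p^r}=f$ in $R$ used inside part (2): this is false for arbitrary rings of characteristic $p$ and relies specifically on the semisimple structure of $R$, i.e. on the fact that $Z(I)$ consists of points whose coordinates are roots of $X_j^{N_j}-1$ with $N_j\mid p^r-1$, so that $\mathrm{ev}(f)\in\F_{p^r}^n$ forces $\mathrm{ev}(f^{p^r})=\mathrm{ev}(f)$ and then injectivity of $\mathrm{ev}$ closes the argument. Everything else is a mechanical unwinding of definitions together with the characteristic-$p$ Frobenius being a ring endomorphism.
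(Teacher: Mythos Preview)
Your proof is correct and follows exactly the approach the paper sketches: items (1)--(3) unwind the definition of $\mathcal{T}$ together with the Frobenius identities and the relations in $R$, while (4) is precisely injectivity of $\mathrm{ev}$ from Proposition~\ref{te:isomorphims}. The paper compresses all of this into one sentence (``follow from the definition of $\mathcal{T}$, properties of finite fields and the fact that we are working modulo $I$''); your only minor detour is justifying $f^{p^r}=f$ via injectivity of $\mathrm{ev}$, whereas the phrase ``working modulo $I$'' points to the more direct observation that $X_j^{N_j}=1$ in $R$ and $N_j\mid p^r-1$ force $X_j^{p^r}=X_j$, hence the $p^r$-Frobenius is the identity on $R$.
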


\begin{proof} Items (1), (2) and (3) follow  from the definition of $\mathcal{T}$, properties of finite fields and the fact that we are working modulo $I$. Item (4) holds because the map $\mathrm{ev}$ is an isomorphism.
\end{proof}

\begin{pro}\label{p:gprops} Let $g\in R$. Then, the following statements are equivalent.
\begin{enumerate}
    \item $g = \mathcal{T}(h)$ for some $h\in R$.
    \item $g^{p^s} = g$.
    \item $g$ evaluates to $\mathbb{F}_{p^s}$.
\end{enumerate}
\end{pro}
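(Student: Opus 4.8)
The plan is to prove the cycle of implications $(1)\Rightarrow(2)\Rightarrow(3)\Rightarrow(1)$. The first implication is essentially free: if $g=\mathcal{T}(h)$ for some $h\in R$, then item (2) of Proposition \ref{p:Tmap} gives $g^{p^s}=\mathcal{T}(h)^{p^s}=\mathcal{T}(h)=g$, so $(1)\Rightarrow(2)$ holds immediately.

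For $(2)\Rightarrow(3)$, suppose $g^{p^s}=g$ in $R$. Evaluating at any point $\boldsymbol{\alpha}\in Z(I)$ and using that $\mathrm{ev}$ respects the ring structure (so that $\mathrm{ev}(g^{p^s})=\mathrm{ev}(g)^{p^s}$ coordinatewise, since raising to the $p^s$-th power is applied entrywise in $\mathbb{F}_{p^r}^n$), we get $g(\boldsymbol{\alpha})^{p^s}=g(\boldsymbol{\alpha})$ for every $\boldsymbol{\alpha}$. An element $x\in\mathbb{F}_{p^r}$ satisfies $x^{p^s}=x$ if and only if $x$ is a root of $X^{p^s}-X$, i.e. if and only if $x\in\mathbb{F}_{p^s}$; this uses $s\mid r$ so that $\mathbb{F}_{p^s}\subseteq\mathbb{F}_{p^r}$ and that $\mathbb{F}_{p^s}$ is exactly the fixed field of the $p^s$-power Frobenius. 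Hence $g(\boldsymbol{\alpha})\in\mathbb{F}_{p^s}$ for all $\boldsymbol{\alpha}\in Z(I)$, which is precisely the statement that $g$ evaluates to $\mathbb{F}_{p^s}$.

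For $(3)\Rightarrow(1)$, the natural idea is to produce the preimage under $\mathcal{T}$ explicitly. Since $\mathrm{ev}$ is an isomorphism of $\mathbb{F}_{p^r}$-vector spaces (Proposition \ref{te:isomorphims}) and $\mathrm{tr}_r^s\colon\mathbb{F}_{p^r}\to\mathbb{F}_{p^s}$ is surjective, the extended map $\mathbf{tr}\colon\mathbb{F}_{p^r}^n\to\mathbb{F}_{p^s}^n$ is surjective; so if $g$ evaluates to $\mathbb{F}_{p^s}$, then $\mathrm{ev}(g)\in\mathbb{F}_{p^s}^n$ lies in the image of $\mathbf{tr}$, say $\mathrm{ev}(g)=\mathbf{tr}(\mathbf{w})$ with $\mathbf{w}=\mathrm{ev}(h)$ for a (unique) $h\in R$. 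Then item (3) of Proposition \ref{p:Tmap} gives $\mathrm{ev}(\mathcal{T}(h))=\mathbf{tr}(\mathrm{ev}(h))=\mathbf{tr}(\mathbf{w})=\mathrm{ev}(g)$, and injectivity of $\mathrm{ev}$ forces $\mathcal{T}(h)=g$. This closes the cycle.

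The only mildly delicate point is $(3)\Rightarrow(1)$, where one must invoke the surjectivity of the trace map $\mathrm{tr}_r^s$ (a standard fact for finite field extensions, and the reason the hypothesis $s\mid r$ is needed) together with the isomorphism property of $\mathrm{ev}$; everything else is a direct computation using Proposition \ref{p:Tmap}. I expect no real obstacle here, only bookkeeping to make sure the identifications between $R$ and $\mathbb{F}_{p^r}^n$ via $\mathrm{ev}$, and the fact that $\mathcal{T}$ and $\mathbf{tr}$ correspond under $\mathrm{ev}$, are used consistently.
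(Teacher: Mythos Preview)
Your proof is correct and follows essentially the same cycle $(1)\Rightarrow(2)\Rightarrow(3)\Rightarrow(1)$ as the paper, using Proposition~\ref{p:Tmap} for the first step, the Frobenius characterization of $\mathbb{F}_{p^s}$ for the second, and surjectivity of $\mathbf{tr}$ together with the isomorphism $\mathrm{ev}$ for the third. The only cosmetic difference is that the paper phrases the preimage $h$ as the class of an interpolating polynomial rather than invoking bijectivity of $\mathrm{ev}$ directly, but the argument is the same.
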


\begin{proof} For a start, suppose that for some $h\in R$, $g = \mathcal{T}(h)$. Then
$$g^{p^s} = \mathcal{T}(h)^{p^s} = \mathcal{T}(h) = g,$$
where the second equality follows from Proposition \ref{p:Tmap}. If $g^{p^s} = g$, then for any $\boldsymbol{\alpha}\in\F_{p^r}^m$, $g(\boldsymbol{\alpha})^{p^s} = g(\boldsymbol{\alpha})$ and so $g(\boldsymbol{\alpha})\in \F_{p^s}$. Lastly suppose that, for every $\boldsymbol{\alpha}  \in \F_{p^r}^m$, $g(\boldsymbol{\alpha})\in \F_{p^s}$. Since $\mathbf{tr}$ is surjective, we can consider $\mathbf{y} \in \F_{p^r}^n$ such that $\mathbf{tr} (\mathbf{y}) = \mathrm{ev}(g)$. Now, consider the class $h$ of a interpolating polynomial satisfying $\mathrm{ev}(h) = \mathbf{y}$, then, $$\mathrm{ev }(\mathcal{T}(h)) = \mathbf{tr}(\mathrm{ev}(h)) = \mathrm{ev} (g)$$ and the proof is concluded since $\mathrm{ev}$ is an isomorphism.
\end{proof}


Given a positive integer $t$, $\mathbb{Z}_t$ will stand for the quotient ring $\mathbb{Z}/t\mathbb{Z}$.

\begin{de} A subset $\mathfrak{I}$ of the cartesian product $\mathbb{Z}_{N_1}\times \mathbb{Z}_{N_2} \times \cdots\times\mathbb{Z}_{N_m}$ is called to be a {\it cyclotomic set} if  $\mathfrak{I}= p\cdot \mathfrak{I} : = \{p \cdot \boldsymbol{\alpha} \;| \; \boldsymbol{\alpha}\in \mathfrak{I}\}$. $\mathfrak{I}$ will be a {\it minimal cyclotomic set} if there is $\boldsymbol{\alpha} \in \mathbb{Z}_{N_1}\times \mathbb{Z}_{N_2} \times \cdots\times\mathbb{Z}_{N_m}$ such that every element of $\mathfrak{I}$ can be written $p^{s i } \cdot \boldsymbol{\alpha}$ for some integer $i$.
\end{de}

Fix a monomial ordering on $\mathbb{Z}_{\geq 0}^m$, where $\mathbb{Z}_{\geq 0}$ denotes the nonnegative integers. A minimal cyclotomic set $\mathfrak{I}$  will be represented by that element $\mathbf{a} \in \mathfrak{I}$ with smallest coordinates with respect to the  above fixed monomial ordering. Notice that we use the ordering for determining a unique representative of the set $\mathfrak{I}$ and in the proof of Theorem \ref{base}. Thus, we will set $\mathfrak{I}_\mathbf{a}:= \mathfrak{I}$ and $i_\mathbf{a} : = \mathrm{card}(\mathfrak{I}_\mathbf{a})$. Finally, the set of elements $\mathbf{a}$ representing minimal cyclotomic sets is denoted by $\mathcal{A}$.

For every $ \mathbf{a}=(a_1, a_2, \ldots, a_m)\in\mathcal{A}$, $i_\mathbf{a}$ is a divisor of $r$ and it holds that $a_i p^{s i_\mathbf{a}} \equiv a_i \, (\text{mod } N_i)$, $1 \leq i \leq m$. In addition, every cyclotomic set is a union of minimal cyclotomic sets and  the minimal cyclotomic sets constitute a partition of $\mathbb{Z}_{N_1}\times \mathbb{Z}_{N_2} \times \cdots\times\mathbb{Z}_{N_m}$. Recall that we denote by $f$ an element in $R$ and its canonical representative and we set $\mathrm{supp}(f)$ the support of the canonical representative. Then, any element $f\in R$ may be decomposed in a unique way as a sum of classes of polynomials with support included in minimal cyclotomic sets. That is to say, $f = \sum_{\mathbf{a} \in \mathcal{A}} f_\mathbf{a}$, where $\mathrm{supp} (f_\mathbf{a}) \subseteq \mathfrak{I}_\mathbf{a}$. We also notice that $\mathrm{supp} (\mathcal{T}(f_\mathbf{a})) \subseteq \mathfrak{I}_\mathbf{a}$.

Now define the function $\mathcal{T}_\mathbf{a} : R \rightarrow R$ as $\mathcal{T}_\mathbf{a}(f) = f + f^{p^s} + \cdots + f^{p^{s(i_\mathbf{a} -1)}}$ and set $X^\mathbf{a} = X_1^{a_1} X_2^{a_2} \cdots X_m^{a_m}$. Then we are ready to state and prove Theorem \ref{base}, which gives a basis for the vector space of elements in $R$ evaluating to the field $\mathbb{F}_{p^{s}}$. First, we provide two results that help us in our purpose.

\begin{pro}\label{LI}
Let  $f$ be an element in $R$ that evaluates to $\mathbb{F}_{p^s}$ with $\mathrm{supp}(f) \subseteq \mathfrak{I}_{\mathbf{a}}$ and consider a primitive element $\beta$ of $\mathbb{F}_{p^{s i_{\mathbf{a}}}}$. Then, $f$ can be expressed as  a linear combination with coefficients in $ \mathbb{F}_{p^s}$ of the elements in $R$ given by $\mathcal{S}_{\mathbf{a}}^\beta:=  \left\{  \mathcal{T}_{\mathbf{a}} (X^\mathbf{a}), \mathcal{T}_{\mathbf{a}} (\beta X^\mathbf{a}), \ldots, \mathcal{T}_{\mathbf{a}} (\beta^{i_\mathbf{a} -1 } X^\mathbf{a})    \right\}$.
\end{pro}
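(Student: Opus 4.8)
The plan is to first determine the exact shape of an element $f$ with $\mathrm{supp}(f)\subseteq\mathfrak{I}_{\mathbf{a}}$ that evaluates to $\mathbb{F}_{p^s}$, then recognize it as a single value of $\mathcal{T}_{\mathbf{a}}$, and finally expand the resulting scalar in an $\mathbb{F}_{p^s}$-basis of $\mathbb{F}_{p^{s i_{\mathbf{a}}}}$ built from the powers of $\beta$.

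First I would write $\mathfrak{I}_{\mathbf{a}}=\{\mathbf{a},p^s\mathbf{a},\dots,p^{s(i_{\mathbf{a}}-1)}\mathbf{a}\}$, where these $i_{\mathbf{a}}$ exponent vectors are pairwise distinct by the very definition of a minimal cyclotomic set, and set the canonical representative $f=\sum_{j=0}^{i_{\mathbf{a}}-1}c_j X^{p^{sj}\mathbf{a}}$ with uniquely determined $c_j\in\mathbb{F}_{p^r}$ (here $X^{p^{sj}\mathbf{a}}=(X^{\mathbf{a}})^{p^{sj}}$ in $R$, and $(X^{\mathbf{a}})^{p^{s i_{\mathbf{a}}}}=X^{\mathbf{a}}$ since $a_i p^{s i_{\mathbf{a}}}\equiv a_i\pmod{N_i}$). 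By Proposition \ref{p:gprops}, $f$ evaluates to $\mathbb{F}_{p^s}$ if and only if $f^{p^s}=f$ in $R$. Computing $f^{p^s}=\sum_j c_j^{p^s}X^{p^{s(j+1)}\mathbf{a}}$ and reindexing modulo $i_{\mathbf{a}}$, the identity $f^{p^s}=f$ becomes $c_j=c_{j-1}^{p^s}$ for all $j$, whence $c_j=c_0^{p^{sj}}$ and, by the cyclic closure, $c_0^{p^{s i_{\mathbf{a}}}}=c_0$, i.e. $c_0\in\mathbb{F}_{p^{s i_{\mathbf{a}}}}$.

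This already yields $f=\sum_{j=0}^{i_{\mathbf{a}}-1}(c_0 X^{\mathbf{a}})^{p^{sj}}=\mathcal{T}_{\mathbf{a}}(c_0 X^{\mathbf{a}})$ for some $c_0\in\mathbb{F}_{p^{s i_{\mathbf{a}}}}$. Then I would use that a primitive element $\beta$ of $\mathbb{F}_{p^{s i_{\mathbf{a}}}}$, being a generator of the multiplicative group, lies in no proper subfield, so $\mathbb{F}_{p^s}(\beta)=\mathbb{F}_{p^{s i_{\mathbf{a}}}}$, the minimal polynomial of $\beta$ over $\mathbb{F}_{p^s}$ has degree $i_{\mathbf{a}}$, and hence $\{1,\beta,\dots,\beta^{i_{\mathbf{a}}-1}\}$ is an $\mathbb{F}_{p^s}$-basis of $\mathbb{F}_{p^{s i_{\mathbf{a}}}}$. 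Writing $c_0=\sum_{l=0}^{i_{\mathbf{a}}-1}\lambda_l\beta^l$ with $\lambda_l\in\mathbb{F}_{p^s}$, and using that $\mathcal{T}_{\mathbf{a}}$ is additive (characteristic $p$) and satisfies $\mathcal{T}_{\mathbf{a}}(\lambda g)=\lambda\,\mathcal{T}_{\mathbf{a}}(g)$ for $\lambda\in\mathbb{F}_{p^s}$ (same computation as in Proposition \ref{p:Tmap}(1), since $\lambda^{p^s}=\lambda$), I would conclude $f=\mathcal{T}_{\mathbf{a}}\bigl(\sum_l\lambda_l\beta^l X^{\mathbf{a}}\bigr)=\sum_{l=0}^{i_{\mathbf{a}}-1}\lambda_l\,\mathcal{T}_{\mathbf{a}}(\beta^l X^{\mathbf{a}})$, exhibiting $f$ as an $\mathbb{F}_{p^s}$-linear combination of the elements of $\mathcal{S}_{\mathbf{a}}^\beta$.

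I expect the only genuinely delicate point to be the claim that $1,\beta,\dots,\beta^{i_{\mathbf{a}}-1}$ are $\mathbb{F}_{p^s}$-linearly independent: this is precisely where primitivity of $\beta$ is essential (a generator of the multiplicative group cannot lie in a proper subfield, so $\mathbb{F}_{p^s}(\beta)$ is the whole field $\mathbb{F}_{p^{s i_{\mathbf{a}}}}$), and without it the claimed spanning property could fail. Everything else is routine Frobenius bookkeeping together with the description of minimal cyclotomic sets recalled before the statement.
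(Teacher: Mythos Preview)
Your proof is correct and follows essentially the same route as the paper: from $\mathrm{supp}(f)\subseteq\mathfrak{I}_{\mathbf{a}}$ and $f^{p^s}=f$ you extract a single coefficient $c_0\in\mathbb{F}_{p^{s i_{\mathbf{a}}}}$ with $f=\mathcal{T}_{\mathbf{a}}(c_0X^{\mathbf{a}})$, expand $c_0$ in the $\mathbb{F}_{p^s}$-basis $\{1,\beta,\dots,\beta^{i_{\mathbf{a}}-1}\}$, and use the $\mathbb{F}_{p^s}$-linearity of $\mathcal{T}_{\mathbf{a}}$. The paper's argument is terser (it asserts the shape of $f$ without writing out the recursion $c_j=c_{j-1}^{p^s}$), but the two proofs are structurally identical.
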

\begin{proof}
Since $\mathrm{supp}(f) \subseteq \mathfrak{I}_{\mathbf{a}}$ and $f^{p^s}=f$,  there is some $\alpha \in  \mathbb{F}_{p^r}$ such that  $f= \sum_{j=0}^{i_\mathbf{a}-1} (\alpha X^{\mathbf{a}})^{p^{js}}$.
Moreover $ \alpha^{{p}^{s i_{\mathbf{a}}}}  = \alpha$, which implies that $\alpha \in \mathbb{F}_{{p^{si_\mathbf{a}}}}$.

We know that $\{1,\beta, \ldots,\beta^{i_\mathbf{a}-1}\}$ is a basis of
$\mathbb{F}_{{p}^{si_\mathbf{a}}}$ over  $\mathbb{F}_{{p^s}}$, so
$\alpha=a_0+a_1\beta+\cdots+a_{i_\mathbf{a}-1}\beta^{i_\mathbf{a}-1}$,
with $a_i\in \mathbb{F}_{p^s}$ for all $i$.
Therefore,
\[ f= \sum_{j=0}^{i_\mathbf{a} - 1} \alpha^{p^{js}} X^{p^{js} \mathbf{a}}=  \sum_{j=0}^{i_\mathbf{a} - 1} X^{p^{js} \mathbf{a}} \left( \sum_{l=0}^{i_\mathbf{a} - 1} a_l \beta^l \right)^{p^{js}} \]
\[
= \sum_{l=0}^{i_\mathbf{a} - 1} a_l \left( \sum_{j=0}^{i_\mathbf{a} - 1} \beta^{lp^{js}} X^{p^{js} \mathbf{a}} \right) = \sum_{l=0}^{i_\mathbf{a} - 1} a_l \mathcal{T}_\mathbf{a} (\beta^l X^\mathbf{a}).
\]
\end{proof}

\begin{pro}\label{LinearlyInd}
 The vectors in the previous considered set $\mathcal{S}_{\mathbf{a}}^\beta$ are linearly independent over $\mathbb{F}_{p^s}$.
\end{pro}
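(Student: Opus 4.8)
The plan is to show that the $i_\mathbf{a}$ elements $\mathcal{T}_\mathbf{a}(\beta^l X^\mathbf{a})$, $l = 0, 1, \ldots, i_\mathbf{a}-1$, are linearly independent over $\mathbb{F}_{p^s}$. Suppose a relation $\sum_{l=0}^{i_\mathbf{a}-1} a_l \, \mathcal{T}_\mathbf{a}(\beta^l X^\mathbf{a}) = 0$ holds with all $a_l \in \mathbb{F}_{p^s}$. Using the $\mathbb{F}_{p^s}$-linearity of $\mathcal{T}_\mathbf{a}$ (which follows as in Proposition \ref{p:Tmap}(1), since $a_l^{p^{js}} = a_l$), this rewrites as $\mathcal{T}_\mathbf{a}(\alpha X^\mathbf{a}) = 0$ where $\alpha := \sum_{l=0}^{i_\mathbf{a}-1} a_l \beta^l \in \mathbb{F}_{p^{s i_\mathbf{a}}}$. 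Expanding the definition of $\mathcal{T}_\mathbf{a}$, this says $\sum_{j=0}^{i_\mathbf{a}-1} \alpha^{p^{js}} X^{p^{js}\mathbf{a}} = 0$ in $R$.

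Next I would exploit the fact that the monomials $X^{p^{js}\mathbf{a}}$, for $j = 0, 1, \ldots, i_\mathbf{a}-1$, are pairwise distinct canonical representatives: these exponents are precisely the $i_\mathbf{a}$ distinct elements of the minimal cyclotomic set $\mathfrak{I}_\mathbf{a}$, reduced modulo $(N_1, \ldots, N_m)$. Since distinct monomials in canonical form are linearly independent in $R$ over $\mathbb{F}_{p^r}$ (they form part of a basis), the vanishing of $\sum_j \alpha^{p^{js}} X^{p^{js}\mathbf{a}}$ forces every coefficient to vanish; in particular $\alpha = 0$. Then $\sum_{l=0}^{i_\mathbf{a}-1} a_l \beta^l = 0$ with the $a_l \in \mathbb{F}_{p^s}$, and since $\{1, \beta, \ldots, \beta^{i_\mathbf{a}-1}\}$ is an $\mathbb{F}_{p^s}$-basis of $\mathbb{F}_{p^{s i_\mathbf{a}}}$, all $a_l$ are zero, giving the claimed linear independence.

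The only point requiring care is the claim that the exponents $p^{js}\mathbf{a}$ for $0 \le j \le i_\mathbf{a}-1$ give $i_\mathbf{a}$ distinct monomials — i.e. that the representative $\mathbf{a}$ of the minimal cyclotomic set $\mathfrak{I}_\mathbf{a}$ genuinely has orbit of full size $i_\mathbf{a} = \mathrm{card}(\mathfrak{I}_\mathbf{a})$ under multiplication by $p^s$ modulo $(N_1,\ldots,N_m)$. This is immediate from the definition of $i_\mathbf{a}$ as the cardinality of $\mathfrak{I}_\mathbf{a}$ together with the fact, recorded just before Proposition \ref{LI}, that $a_i p^{s i_\mathbf{a}} \equiv a_i \pmod{N_i}$ for each $i$, so that $j \mapsto p^{js}\mathbf{a}$ is a bijection from $\{0,\ldots,i_\mathbf{a}-1\}$ onto $\mathfrak{I}_\mathbf{a}$. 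I expect this bookkeeping to be the main (and essentially only) obstacle; once it is in place, the two linear-independence facts (of canonical monomials over $\mathbb{F}_{p^r}$, and of powers of $\beta$ over $\mathbb{F}_{p^s}$) close the argument.
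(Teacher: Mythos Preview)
Your argument is correct and follows essentially the same route as the paper: both proofs read off the coefficient of $X^{\mathbf{a}}$ in the assumed relation to obtain $\sum_{l} a_l \beta^l = 0$, and then invoke that $\beta$ is a primitive element of $\mathbb{F}_{p^{s i_{\mathbf{a}}}}$ (phrased in the paper via the degree of the minimal polynomial, and by you via $\{1,\beta,\ldots,\beta^{i_{\mathbf{a}}-1}\}$ being a basis) to force all $a_l=0$. Your version is slightly more explicit about the distinctness of the monomials $X^{p^{js}\mathbf{a}}$, which the paper uses implicitly when isolating the $X^{\mathbf{a}}$ term.
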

\begin{proof}
Reasoning by contradiction, assume that $\sum_{l=0}^{i_\mathbf{a} - 1} a_l \mathcal{T}_\mathbf{a} (\beta^l X^\mathbf{a}) = 0$. Then, the term whose attached monomial is $X^\mathbf{a}$  and appears in the left hand side of the above equality is $(a_0+a_1\beta+\cdots+a_{i_\mathbf{a}-1}\beta^{i_\mathbf{a}-1})X^\mathbf{a}$  and it must vanish. This is true only if $\beta$ is a root of the univariate polynomial $a_0+a_1 Z+\cdots+a_{i_\mathbf{a}-1}Z^{i_\mathbf{a}-1}$. This gives the desired contradiction because the minimal polynomial of $\beta$ has degree $i_\mathbf{a}$.
\end{proof}
Next, we state the mentioned theorem.
\begin{teo}\label{base}
The following set
\[
\Omega_s^R := \bigcup_{\mathbf{a} \in \mathcal{A} } \left\{ \mathcal{T}_{\mathbf{a}} (\beta^{l} X^\mathbf{a}) \; | \;
0 \leq l \leq i_\mathbf{a}-1 \mbox{ and  $\beta$ is a primitive element of $\mathbb{F}_{p^{s i_\mathbf{a}}}$} \right\}
\]
constitutes a basis for the vector space  over $\mathbb{F}_{p^s}$ of elements in $R$ evaluating to $\mathbb{F}_{p^s}$.
\end{teo}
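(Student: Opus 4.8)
The plan is to check the three defining conditions for $\Omega_s^R$ to be a basis of the $\mathbb{F}_{p^s}$-vector space $V$ of elements of $R$ that evaluate to $\mathbb{F}_{p^s}$: that $\Omega_s^R\subseteq V$, that $\Omega_s^R$ spans $V$ over $\mathbb{F}_{p^s}$, and that $\Omega_s^R$ is $\mathbb{F}_{p^s}$-linearly independent. The first inclusion is short. Fix $\mathbf{a}\in\mathcal{A}$, a primitive element $\beta$ of $\mathbb{F}_{p^{si_\mathbf{a}}}$ and $0\le l\le i_\mathbf{a}-1$, and put $g:=\mathcal{T}_\mathbf{a}(\beta^l X^\mathbf{a})=\sum_{j=0}^{i_\mathbf{a}-1}(\beta^l X^\mathbf{a})^{p^{js}}$. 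Because $\beta^{p^{si_\mathbf{a}}}=\beta$ and, as recalled before the statement, $p^{si_\mathbf{a}}\mathbf{a}\equiv\mathbf{a}$ componentwise modulo $(N_1,\dots,N_m)$, we have $(\beta^l X^\mathbf{a})^{p^{si_\mathbf{a}}}=\beta^l X^\mathbf{a}$ in $R$; hence raising $g$ to the power $p^s$ just cyclically permutes the $i_\mathbf{a}$ summands, so $g^{p^s}=g$ and, by Proposition \ref{p:gprops}, $g\in V$. Thus $\Omega_s^R\subseteq V$.

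For the spanning property, let $g\in V$ and use the unique cyclotomic decomposition $g=\sum_{\mathbf{a}\in\mathcal{A}}g_\mathbf{a}$ with $\mathrm{supp}(g_\mathbf{a})\subseteq\mathfrak{I}_\mathbf{a}$. I would first argue that each $g_\mathbf{a}$ again lies in $V$: by Proposition \ref{p:gprops}, $g^{p^s}=g$, while $g^{p^s}=\sum_{\mathbf{a}}g_\mathbf{a}^{p^s}$ with $\mathrm{supp}(g_\mathbf{a}^{p^s})\subseteq p^s\cdot\mathfrak{I}_\mathbf{a}=\mathfrak{I}_\mathbf{a}$, since each minimal cyclotomic set is stable under multiplication by $p^s$; uniqueness of the decomposition then forces $g_\mathbf{a}^{p^s}=g_\mathbf{a}$, i.e. $g_\mathbf{a}\in V$. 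Since $g_\mathbf{a}$ evaluates to $\mathbb{F}_{p^s}$ and is supported in the single set $\mathfrak{I}_\mathbf{a}$, Proposition \ref{LI} writes it as an $\mathbb{F}_{p^s}$-linear combination of $\mathcal{S}_\mathbf{a}^\beta\subseteq\Omega_s^R$; adding these up expresses $g$ as an $\mathbb{F}_{p^s}$-combination of $\Omega_s^R$.

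Linear independence follows the same splitting. If $\sum_{\mathbf{a}\in\mathcal{A}}\sum_{l=0}^{i_\mathbf{a}-1}a_{\mathbf{a},l}\,\mathcal{T}_\mathbf{a}(\beta^l X^\mathbf{a})=0$ with coefficients $a_{\mathbf{a},l}\in\mathbb{F}_{p^s}$, then, because $\mathrm{supp}(\mathcal{T}_\mathbf{a}(\beta^l X^\mathbf{a}))\subseteq\mathfrak{I}_\mathbf{a}$ and the minimal cyclotomic sets are pairwise disjoint, the partial sum indexed by each fixed $\mathbf{a}$ must vanish on its own; Proposition \ref{LinearlyInd} then gives $a_{\mathbf{a},l}=0$ for all $\mathbf{a}$ and $l$. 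Hence $\Omega_s^R$ is a basis of $V$. As a consistency check, $\mathrm{card}(\Omega_s^R)=\sum_{\mathbf{a}\in\mathcal{A}}i_\mathbf{a}=N_1\cdots N_m=n$, which matches $\dim_{\mathbb{F}_{p^s}}V=n$ coming from the isomorphism $\mathrm{ev}$, which identifies $V$ with $\mathbb{F}_{p^s}^n$.

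The only genuinely delicate point I foresee is the step in the spanning argument showing that each cyclotomic component $g_\mathbf{a}$ of an element of $V$ still evaluates to $\mathbb{F}_{p^s}$; this is exactly where the uniqueness of the cyclotomic decomposition, the $p^s$-invariance of the sets $\mathfrak{I}_\mathbf{a}$, and the fixed monomial ordering used to single out the representatives come into play. Once that is in hand, Propositions \ref{LI} and \ref{LinearlyInd} reduce everything else to bookkeeping over the disjoint blocks $\mathfrak{I}_\mathbf{a}$.
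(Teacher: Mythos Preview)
Your proof is correct and covers all the needed points, including the membership $\Omega_s^R\subseteq V$ that the paper leaves implicit. The linear-independence argument is essentially identical to the paper's: disjointness of the supports $\mathfrak{I}_\mathbf{a}$ reduces to Proposition~\ref{LinearlyInd} on each block.

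The spanning argument, however, differs in a meaningful way. The paper proceeds by an iterative subtraction driven by the fixed monomial ordering: it looks at the smallest term $\beta^{k_1}X^{\mathbf{a}_1}$ of $f$, argues that the whole packet $\mathcal{T}_{\mathbf{a}_1}(\beta^{k_1}X^{\mathbf{a}_1})$ must occur in $f$ because $f$ evaluates to $\mathbb{F}_{p^s}$, subtracts it, and repeats. Your route is more structural: you use the unique decomposition $g=\sum_\mathbf{a} g_\mathbf{a}$, show directly that each component $g_\mathbf{a}$ satisfies $g_\mathbf{a}^{p^s}=g_\mathbf{a}$ (hence lies in $V$) by comparing $g^{p^s}=g$ with the $p^s$-stability of $\mathfrak{I}_\mathbf{a}$ and invoking uniqueness, and then apply Proposition~\ref{LI} to each $g_\mathbf{a}$. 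This is cleaner and makes explicit use of the preparatory Proposition~\ref{LI}, whereas the paper's inductive peeling does not invoke it directly. Conversely, the paper's argument makes visible why the monomial ordering was introduced in the first place; in your version the ordering plays no real role beyond fixing the representatives $\mathbf{a}\in\mathcal{A}$.
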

\begin{proof}
We start by proving that  the classes in $\Omega_s^R$ are linearly independent. This holds, on the one hand,  because there is no linear dependence among the elements in $\Omega_s^R$ supported on different minimal cyclotomic sets. Indeed, any monomial of any element supported on $\mathfrak{I}_{\mathbf{a}}$ is different from that of any other  supported on $\mathfrak{I}_{\mathbf{a'}}$ with $\mathbf{a} \neq \mathbf{a'}$. On the other hand Proposition  \ref{LinearlyInd} proves the independence of the elements supported on the same set $\mathfrak{I}_{\mathbf{a}}$, which shows our statement.

To conclude the proof, we are going to show that the set $\Omega_s^R$ generates the vector space  of elements $f$ in $R$ evaluating to $\mathbb{F}_{p^s}$. Recall that we are using canonical polynomials for representing their corresponding classes in $R$. Consider the term in $f$ with smallest order for the above mentioned monomial ordering on $\mathbb{Z}^m_{\geq 0}$, say $\beta^{k_1} X^{\mathbf{a}_1}$, then $\mathcal{T}_{\mathbf{a}_1} (\beta^{k_1} X^{\mathbf{a}_1}) = \sum_{l=0}^{i_{\mathbf{a}_1} - 1} (\beta^{k_1} X^{\mathbf{a}_1})^{p^{ls}}$ must appear in $f$ because it evaluates to $\mathbb{F}_{p^s}$. Since $\beta^{k_1} X^{\mathbf{a}_1}$ has the smallest order in $f$,  $\mathbf{a}_1$ must be one of the elements in $\mathcal{A}$. Assume, without loss of generality, that these elements are $\{\mathbf{a}_1, \mathbf{a}_2, \ldots, \mathbf{a}_{t}\}$. Set $f_1=f-\mathcal{T}_{\mathbf{a}_1} (\beta^{k_1} X^{\mathbf{a}_1})$ and pick its monomial with smallest order, say $\beta^{k_2} X^{\mathbf{a}_2}$. Again the polynomial $\mathcal{T}_{\mathbf{a}_2} (\beta^{k_2} X^{\mathbf{a}_2})$ must appear in $f_1$. We can repeat the above procedure and consider $f_2=f_1- \mathcal{T}_{\mathbf{a}_2} (\beta^{k_2} X^{\mathbf{a}_2})$. We will finish in $t$ steps and this will provide the desired expression of $f$ as a linear combination of elements in $\Omega_s^R$, which concludes the proof.
\end{proof}

We have just provided  a constructive way of obtaining all classes in $R$
that evaluate to $\mathbb{F}_{p^s}$. In particular, if we restrict to
those ones with support in $U$, we have a formula for
the dimension of an affine variety subfield-subcode.
\begin{teo}
\label{t:D_U}
Let $U$ be a subset of $\{0, 1, \dots,N_1-1\}\times \{0, 1, \dots,N_2-1\} \times \cdots\times\{0, 1, \dots,N_m-1\}$, $n=N_1 N_2 \cdots N_m$  and define $C_U^s : = C_U \cap \mathbb{F}_{p^s}^n$. Then,
\[C_U^s = \mathrm{ev} \; \left( \mathcal{T}(\mathbb{F}_{p^r}^{\mathcal{H}})\cap\mathbb{F}_{p^r}^U \right),
\]
$C_U^s$ is generated by the images under the evaluation map {\rm ev} of the following elements in $R$
\[
\bigcup_{\mathfrak{I}_\mathbf{a}| \mathfrak{I}_\mathbf{a}\subseteq U}  \left\{ \mathcal{T}_{\mathbf{a}} (\beta^{l} X^\mathbf{a}) \; | \; 0 \leq l \leq i_\mathbf{a}-1 \mbox{ and  $\beta$ is a primitive element of $\mathbb{F}_{p^{s i_\mathbf{a}}}$} \right\}
\]
and
\[ \dim C_U^s = \sum_{\mathfrak{I}_\mathbf{a}| \mathfrak{I}_\mathbf{a}\subseteq U} i_\mathbf{a}.
\]
\end{teo}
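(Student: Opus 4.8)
The plan is to deduce Theorem \ref{t:D_U} from Theorem \ref{base} together with Proposition \ref{p:gprops} and the decomposition of $R$ into components supported on minimal cyclotomic sets. First I would establish the identity $C_U^s = \mathrm{ev}\left( \mathcal{T}(\mathbb{F}_{p^r}^{\mathcal{H}}) \cap \mathbb{F}_{p^r}^U \right)$. For the inclusion ``$\supseteq$'', note that $\mathbb{F}_{p^r}^U \subseteq \mathbb{F}_{p^r}^{\mathcal{H}} = R$, so $\mathrm{ev}(\mathbb{F}_{p^r}^U) = C_U$; and by Proposition \ref{p:gprops} any element of the form $\mathcal{T}(h)$ evaluates to $\mathbb{F}_{p^s}$, hence $\mathrm{ev}(\mathcal{T}(\mathbb{F}_{p^r}^{\mathcal{H}}) \cap \mathbb{F}_{p^r}^U) \subseteq C_U \cap \mathbb{F}_{p^s}^n = C_U^s$. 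For ``$\subseteq$'', take $c \in C_U^s$; then $c = \mathrm{ev}(f)$ for a unique $f \in \mathbb{F}_{p^r}^U$ (uniqueness from Proposition \ref{te:isomorphims}), and $f$ evaluates to $\mathbb{F}_{p^s}$, so by Proposition \ref{p:gprops} $f = \mathcal{T}(h)$ for some $h \in R$; thus $f \in \mathcal{T}(\mathbb{F}_{p^r}^{\mathcal{H}}) \cap \mathbb{F}_{p^r}^U$, giving the reverse inclusion.

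Next I would identify the generating set. By Theorem \ref{base}, the vector space of elements of $R$ evaluating to $\mathbb{F}_{p^s}$ has the basis $\Omega_s^R = \bigcup_{\mathbf{a} \in \mathcal{A}} \{ \mathcal{T}_\mathbf{a}(\beta^l X^\mathbf{a}) : 0 \leq l \leq i_\mathbf{a} - 1 \}$, and each basis element $\mathcal{T}_\mathbf{a}(\beta^l X^\mathbf{a})$ has support contained in $\mathfrak{I}_\mathbf{a}$. An element $f$ evaluating to $\mathbb{F}_{p^s}$ lies in $\mathbb{F}_{p^r}^U$ — equivalently has support in $U$ — if and only if, writing $f = \sum_{\mathbf{a} \in \mathcal{A}} f_\mathbf{a}$ in its unique decomposition into cyclotomic components, only those $f_\mathbf{a}$ with $\mathfrak{I}_\mathbf{a} \subseteq U$ can be nonzero: indeed $\mathrm{supp}(f) = \bigsqcup_{\mathbf{a}} \mathrm{supp}(f_\mathbf{a})$ is a disjoint union inside distinct minimal cyclotomic sets, so $\mathrm{supp}(f) \subseteq U$ forces $\mathrm{supp}(f_\mathbf{a}) = \emptyset$ whenever $\mathfrak{I}_\mathbf{a} \not\subseteq U$ (since $\mathfrak{I}_\mathbf{a}$, being a minimal cyclotomic set, either lies entirely in $U$ or meets $\mathcal{H} \setminus U$; and if $f_\mathbf{a} \neq 0$ then, by Proposition \ref{LI}, its support is all of $\mathfrak{I}_\mathbf{a}$). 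Consequently $\mathcal{T}(\mathbb{F}_{p^r}^{\mathcal{H}}) \cap \mathbb{F}_{p^r}^U$ is exactly the $\mathbb{F}_{p^s}$-span of $\bigcup_{\mathfrak{I}_\mathbf{a} \subseteq U} \{ \mathcal{T}_\mathbf{a}(\beta^l X^\mathbf{a}) : 0 \leq l \leq i_\mathbf{a} - 1 \}$, and applying $\mathrm{ev}$ gives the stated generating set for $C_U^s$.

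Finally, for the dimension formula I would observe that the displayed generating set is linearly independent over $\mathbb{F}_{p^s}$: this is inherited from the linear independence of $\Omega_s^R$ proved in Theorem \ref{base} (independence across different $\mathfrak{I}_\mathbf{a}$ because the monomial supports are disjoint, and independence within a fixed $\mathfrak{I}_\mathbf{a}$ by Proposition \ref{LinearlyInd}), and $\mathrm{ev}$ is an isomorphism by Proposition \ref{te:isomorphims}, so it preserves linear independence. Hence the images under $\mathrm{ev}$ form a basis of $C_U^s$, and $\dim_{\mathbb{F}_{p^s}} C_U^s = \sum_{\mathfrak{I}_\mathbf{a} \subseteq U} i_\mathbf{a}$ by counting: there are $i_\mathbf{a}$ elements indexed by $0 \leq l \leq i_\mathbf{a} - 1$ for each admissible $\mathbf{a}$.

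I expect the main obstacle to be the careful bookkeeping in the second step — precisely justifying that ``support in $U$'' passes to ``each cyclotomic component is either zero or fully supported on a minimal cyclotomic set contained in $U$.'' The key point, already available from Proposition \ref{LI} and the fact that minimal cyclotomic sets partition $\mathbb{Z}_{N_1} \times \cdots \times \mathbb{Z}_{N_m}$, is that a nonzero component $f_\mathbf{a}$ of an element evaluating to $\mathbb{F}_{p^s}$ necessarily has $\mathrm{supp}(f_\mathbf{a}) = \mathfrak{I}_\mathbf{a}$ (it cannot be a proper nonempty subset, since it is Frobenius-stable of the form $\sum_j (\alpha X^\mathbf{a})^{p^{js}}$). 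Everything else is a direct transcription of Theorem \ref{base} restricted to components indexed by those $\mathbf{a}$ with $\mathfrak{I}_\mathbf{a} \subseteq U$.
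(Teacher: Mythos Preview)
Your proposal is correct and follows exactly the route the paper intends: the paper gives no explicit proof of Theorem \ref{t:D_U}, merely remarking just before it that Theorem \ref{base} ``provides a constructive way of obtaining all classes in $R$ that evaluate to $\mathbb{F}_{p^s}$'' and that restricting to those with support in $U$ yields the dimension formula. Your argument fills in precisely these details---the identity $C_U^s=\mathrm{ev}(\mathcal{T}(R)\cap\mathbb{F}_{p^r}^U)$ via Propositions \ref{te:isomorphims} and \ref{p:gprops}, and the observation (extracted from the proof of Proposition \ref{LI}) that a nonzero cyclotomic component $f_\mathbf{a}$ of an element fixed by the $p^s$-Frobenius has support exactly $\mathfrak{I}_\mathbf{a}$, so that ``support in $U$'' forces $\mathfrak{I}_\mathbf{a}\subseteq U$---and nothing more is needed.
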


\section{Quantum codes from  subfield-subcodes of affine variety codes} \label{tres}

We devote this section to explain which of our affine variety subfield-subcodes yield stabilizer codes. Notice that our quantum codes will be defined over a {\it small} field $\mathbb{F}_{p^s}$ with the advantage that the original code is supported over a {\it large} field $\mathbb{F}_{p^r}$.

Previously to state our main result, we will need to describe dual codes of subfield-subcodes. From Proposition \ref{DualGTC}, we know that $C_U^{\perp}$ is the affine variety code defined by $U^{\perp}$. Then, we get
\[
\left(C_U^s\right)^{\perp}= \mathbf{tr} \left(C_{U^{\perp}}\right) = \mathbf{tr} \left(\mathrm{ev} (\mathbb{F}_{p^r}^{U^\perp})\right) = \mathrm{ev} \left(\mathcal{T}(\mathbb{F}_{p^r}^{U^{\perp}})\right),
\]
where the first equality follows from Delsarte's Theorem \cite{delsarte} and the last one from the fact that, by Proposition \ref{p:gprops}, the following map composition equality $\mathrm{ev} \circ \mathcal{T} = \mathbf{tr} \circ \mathrm{ev}$ holds. Notice that, as above, we are identifying classes in $R$ with canonical representatives. Now, $\mathcal{T}( \mathbb{F}_{p^r}^{U^{\perp}} )$ is generated by $\mathcal{T}(\gamma X^\mathbf{a})$ for $\mathbf{a}\in U^{\perp}$ and $\gamma \in \mathbb{F}_{p^r}$.  If one fixes  $\mathbf{a}$ and varies $\gamma$ over the field, then the set $\{\mathcal{T}_{\mathbf{a}} (\beta^l X^\mathbf{a})\}_{0 \leq l \leq i_\mathbf{a}-1}$, $\beta$ primitive, is obtained. Thus we have proved the following result.

\begin{teo}\label{DualOfSFSC}
Let $U \subseteq \mathcal{H}$ be as in Section \ref{afin}. The  dual code $(C_U^s)^\perp$ of the code $C_U^s$ is generated by the image by {\rm ev} of the following elements in $R$:
\[
\bigcup_{\mathfrak{I}_\mathbf{a}| \mathfrak{I}_\mathbf{a} \cap U^\perp \neq \emptyset}  \left\{ \mathcal{T}_{\mathbf{a}} (\beta^{l} X^\mathbf{a}) \; | \; 0 \leq l \leq i_\mathbf{a}-1 \mbox{ and  $\beta$ is a primitive element of $\mathbb{F}_{p^{s i_\mathbf{a}}}$} \right\}.
\]
As a consequence, it holds that
\[ \dim (C_U^s)^\perp = \sum_{\mathfrak{I}_\mathbf{a}| \mathfrak{I}_\mathbf{a} \cap U^\perp \neq \emptyset} i_\mathbf{a}.
\]
\end{teo}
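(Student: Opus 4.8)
The plan is to follow the chain of equalities already sketched in the paragraph preceding the statement and make each link precise. First I would establish the identity $(C_U^s)^\perp = \mathbf{tr}\left(C_{U^\perp}\right)$. By definition $C_U^s = C_U \cap \mathbb{F}_{p^s}^n$, i.e.\ it is the subfield-subcode of $C_U$ over $\mathbb{F}_{p^s}$. Delsarte's theorem \cite{delsarte} asserts exactly that the dual of a subfield-subcode equals the trace of the dual code; applied here it gives $(C_U^s)^\perp = \mathbf{tr}\left((C_U)^\perp\right)$. Then Proposition \ref{DualGTC} identifies $(C_U)^\perp$ with the affine variety code $C_{U^\perp}$, where $U^\perp = \mathcal{H}\setminus\{\hat{\mathbf{u}}\mid\mathbf{u}\in U\}$, so $(C_U^s)^\perp = \mathbf{tr}\left(C_{U^\perp}\right) = \mathbf{tr}\left(\mathrm{ev}(\mathbb{F}_{p^r}^{U^\perp})\right)$.

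Next I would convert the trace-of-evaluation into evaluation-of-$\mathcal{T}$. By Proposition \ref{p:gprops} (or, more directly, item (3) of Proposition \ref{p:Tmap}), we have the operator identity $\mathbf{tr}\circ\mathrm{ev} = \mathrm{ev}\circ\mathcal{T}$ on $R$. Since $\mathbb{F}_{p^r}^{U^\perp}$ is an $\mathbb{F}_{p^r}$-subspace of $R$ spanned by the monomials $X^{\mathbf{a}}$ with $\mathbf{a}\in U^\perp$, and $\mathcal{T}$ is $\mathbb{F}_{p^s}$-linear, we get $(C_U^s)^\perp = \mathrm{ev}\left(\mathcal{T}(\mathbb{F}_{p^r}^{U^\perp})\right)$, and $\mathcal{T}(\mathbb{F}_{p^r}^{U^\perp})$ is the $\mathbb{F}_{p^s}$-span of $\{\mathcal{T}(\gamma X^{\mathbf{a}})\mid \mathbf{a}\in U^\perp,\ \gamma\in\mathbb{F}_{p^r}\}$.

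The third step is the genuinely combinatorial point: I must show that for a fixed $\mathbf{a}\in U^\perp$, the $\mathbb{F}_{p^s}$-span of $\{\mathcal{T}(\gamma X^{\mathbf{a}})\mid\gamma\in\mathbb{F}_{p^r}\}$ coincides with the $\mathbb{F}_{p^s}$-span of $\{\mathcal{T}_{\mathbf{b}}(\beta^l X^{\mathbf{b}})\mid 0\le l\le i_{\mathbf{b}}-1\}$, where $\mathbf{b}$ is the representative of the minimal cyclotomic set $\mathfrak{I}_{\mathbf{a}}$ containing $\mathbf{a}$ and $\beta$ is primitive in $\mathbb{F}_{p^{si_{\mathbf{b}}}}$. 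Working modulo $I$, $(X^{\mathbf{a}})^{p^{s}} = X^{p^s\cdot\mathbf{a}}$ where $p^s\cdot\mathbf{a}$ is reduced coordinatewise mod the $N_i$; after $i_{\mathbf{b}}$ applications of $p^s$ one returns to $\mathbf{a}$, so the orbit of monomials appearing in $\mathcal{T}(\gamma X^{\mathbf{a}})$ is exactly $\mathfrak{I}_{\mathbf{a}}=\mathfrak{I}_{\mathbf{b}}$, and in fact $\mathcal{T}(\gamma X^{\mathbf{a}})$ is periodic with period $i_{\mathbf{b}}$, hence equals $\mathcal{T}_{\mathbf{b}}(\delta X^{\mathbf{b}})$ for a suitable $\delta\in\mathbb{F}_{p^{si_{\mathbf{b}}}}$ obtained from $\gamma$ by transporting along the orbit (this uses $i_{\mathbf{a}}=i_{\mathbf{b}}$ and that $a_i p^{si_{\mathbf{a}}}\equiv a_i\pmod{N_i}$, already recorded in the text). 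As $\gamma$ ranges over $\mathbb{F}_{p^r}$, $\delta$ ranges over all of $\mathbb{F}_{p^{si_{\mathbf{b}}}}$; writing $\delta = \sum_l a_l\beta^l$ in the basis $\{1,\beta,\dots,\beta^{i_{\mathbf{b}}-1}\}$ of $\mathbb{F}_{p^{si_{\mathbf{b}}}}$ over $\mathbb{F}_{p^s}$ and using $\mathbb{F}_{p^s}$-linearity of $\mathcal{T}_{\mathbf{b}}$ exactly as in the computation in the proof of Proposition \ref{LI}, we obtain $\mathcal{T}(\gamma X^{\mathbf{a}}) = \sum_l a_l\,\mathcal{T}_{\mathbf{b}}(\beta^l X^{\mathbf{b}})$. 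This proves the spanning claim of the theorem, since taking the union over $\mathbf{a}\in U^\perp$ is the same as taking the union over those $\mathbf{b}\in\mathcal{A}$ with $\mathfrak{I}_{\mathbf{b}}\cap U^\perp\ne\emptyset$.

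Finally, for the dimension formula I would invoke Theorem \ref{base}: the displayed generating set is a subset of the basis $\Omega_s^R$ of the space of elements of $R$ evaluating to $\mathbb{F}_{p^s}$, hence these $\mathbb{F}_{p^s}$-linearly independent elements (independence across distinct cyclotomic sets is automatic from disjoint monomial supports, and within a single $\mathfrak{I}_{\mathbf{b}}$ it is Proposition \ref{LinearlyInd}) form a basis of $(C_U^s)^\perp$ once we note $\mathrm{ev}$ is injective on them. Counting gives $\dim(C_U^s)^\perp = \sum_{\mathfrak{I}_{\mathbf{a}}\cap U^\perp\ne\emptyset} i_{\mathbf{a}}$. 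The only delicate point is the bookkeeping in step three — keeping track of how the scalar $\gamma$ attached to $X^{\mathbf{a}}$ is transported to the canonical representative $X^{\mathbf{b}}$ of its cyclotomic set and checking that surjectivity onto $\mathbb{F}_{p^{si_{\mathbf{b}}}}$ is preserved; everything else is a direct assembly of results already proved.
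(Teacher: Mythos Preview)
Your proposal is correct and follows essentially the same route as the paper's own argument (the paragraph immediately preceding the theorem): Delsarte's theorem plus Proposition~\ref{DualGTC} to get $(C_U^s)^\perp=\mathbf{tr}(C_{U^\perp})$, then the identity $\mathrm{ev}\circ\mathcal{T}=\mathbf{tr}\circ\mathrm{ev}$ to rewrite this as $\mathrm{ev}\bigl(\mathcal{T}(\mathbb{F}_{p^r}^{U^\perp})\bigr)$, and finally the reduction of each $\mathcal{T}(\gamma X^{\mathbf{a}})$ to a combination of the $\mathcal{T}_{\mathbf{b}}(\beta^l X^{\mathbf{b}})$. You in fact fill in details the paper leaves implicit---notably the passage from an arbitrary $\mathbf{a}\in U^\perp$ to the canonical representative $\mathbf{b}$ of its cyclotomic set, and the appeal to Proposition~\ref{LinearlyInd} and Theorem~\ref{base} for the dimension count---so your write-up is if anything more complete than the paper's sketch.
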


We conclude this section with our main result which allows us to construct good stabilizer codes.
We will need the concept of complementary of a minimal  cyclotomic set $\mathfrak{I}_\mathbf{a}$. That set is the subset of $\mathbb{Z}_{N_1}\times \mathbb{Z}_{N_2} \times \cdots\times\mathbb{Z}_{N_m}$ defined as
$\hat{\mathfrak{I}_\mathbf{a}}:=\left \{\hat{\mathbf{u}}\; | \; \mathbf{u} \in \mathfrak{I}_\mathbf{a} \right\}$.

\begin{teo} \label{main}
Let $p$ be a prime number and $r$ and $s$  positive integers such that $s|r$. Let $$\mathcal{H} := \{0, 1, \ldots,N_1-1\}\times \{0, 1, \ldots,N_2-1\} \times\cdots\times\{0, 1, \ldots,N_m-1\}$$ be the hypercube defined in Section \ref{afin}, where $N_i$ divides $p^r -1$ for all index $i$, and consider a nonempty subset $U$ of $\mathcal{H}$. Then,
\begin{enumerate}
\item The codes' inclusion $C_U^s\subseteq (C_{U}^s)^{\perp}$ happens if, and only if, $\hat{\mathfrak{I}_\mathbf{a}}$ is not contained in $U$ whenever $\mathfrak{I}_\mathbf{a}$ is.
\item Assume that $U$ satisfies the conditions in the previous item. Then, from the affine variety code $C_U^s$ a stabilizer code can be obtained. The parameters for that code are $[[n,k, \geq d]]_{p^{s}}$, where $n=N_1 N_2 \cdots N_m$, $k=n-2 \; \sum_{\mathfrak{I}_\mathbf{a} | \mathfrak{I}_\mathbf{a} \subseteq U} i_{\mathbf{a}}$ and $d=d\left((C_U^s)^{\perp}\right)$.
\end{enumerate}
\end{teo}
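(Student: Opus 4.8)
The plan is to read off both items directly from the explicit spanning sets of $C_U^s$ and $(C_U^s)^\perp$ produced in Theorems \ref{t:D_U} and \ref{DualOfSFSC}, and then to feed the resulting self-orthogonality into the CSS Corollary \ref{bueno}. First I would transport everything to $R$ through the isomorphism $\mathrm{ev}$ of Proposition \ref{te:isomorphims}. For each $\mathbf{a}\in\mathcal{A}$, let $V_\mathbf{a}$ denote the $\mathbb{F}_{p^s}$-span of $\mathcal{S}_\mathbf{a}^\beta=\{\mathcal{T}_\mathbf{a}(\beta^l X^\mathbf{a})\mid 0\le l\le i_\mathbf{a}-1\}$. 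By Propositions \ref{LI} and \ref{LinearlyInd}, $V_\mathbf{a}$ is $i_\mathbf{a}$-dimensional and equals the space of all elements of $R$ that evaluate to $\mathbb{F}_{p^s}$ and are supported on $\mathfrak{I}_\mathbf{a}$; since distinct minimal cyclotomic sets are disjoint, the subspaces $V_\mathbf{a}$ have pairwise disjoint monomial supports. With this language, Theorems \ref{t:D_U} and \ref{DualOfSFSC} say exactly that, under $\mathrm{ev}$, $C_U^s$ corresponds to the internal direct sum $\bigoplus_{\mathfrak{I}_\mathbf{a}\subseteq U}V_\mathbf{a}$ and $(C_U^s)^\perp$ corresponds to $\bigoplus_{\mathfrak{I}_\mathbf{a}\cap U^\perp\neq\emptyset}V_\mathbf{a}$, both written as direct sums of blocks indexed by subsets of $\mathcal{A}$.

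Because each $V_\mathbf{a}$ is nonzero and the $V_\mathbf{a}$ sit on pairwise disjoint monomial supports, for any $A,B\subseteq\mathcal{A}$ one has $\bigoplus_{\mathbf{a}\in A}V_\mathbf{a}\subseteq\bigoplus_{\mathbf{a}\in B}V_\mathbf{a}$ if and only if $A\subseteq B$. Hence $C_U^s\subseteq (C_U^s)^\perp$ if and only if every $\mathbf{a}$ with $\mathfrak{I}_\mathbf{a}\subseteq U$ also satisfies $\mathfrak{I}_\mathbf{a}\cap U^\perp\neq\emptyset$. To match the wording of item (1), I would then rephrase this last condition using Proposition \ref{DualGTC}: since $U^\perp=\mathcal{H}\setminus\{\hat{\mathbf{u}}\mid \mathbf{u}\in U\}$ and the map $\mathbf{u}\mapsto\hat{\mathbf{u}}$ is an involution of $\mathcal{H}$, the equality $\mathfrak{I}_\mathbf{a}\cap U^\perp=\emptyset$ means $\mathfrak{I}_\mathbf{a}\subseteq\{\hat{\mathbf{u}}\mid\mathbf{u}\in U\}$, i.e. $\hat{\mathbf{v}}\in U$ for every $\mathbf{v}\in\mathfrak{I}_\mathbf{a}$, i.e. $\hat{\mathfrak{I}_\mathbf{a}}\subseteq U$. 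Negating, $\mathfrak{I}_\mathbf{a}\cap U^\perp\neq\emptyset$ is the same as $\hat{\mathfrak{I}_\mathbf{a}}\not\subseteq U$, and this gives item (1).

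For item (2) I would set $C:=(C_U^s)^\perp$, a linear code over $\mathbb{F}_{p^s}$ of length $n=N_1N_2\cdots N_m$ and, by definition, minimum distance $d=d\big((C_U^s)^\perp\big)$. Item (1) yields $C^\perp=(C_U^s)^{\perp\perp}=C_U^s\subseteq (C_U^s)^\perp=C$, so $C$ satisfies the hypothesis $C^\perp\subseteq C$ of Corollary \ref{bueno}; moreover $\dim C=n-\dim C_U^s=n-\sum_{\mathfrak{I}_\mathbf{a}\subseteq U}i_\mathbf{a}$ by Theorem \ref{t:D_U}. Applying Corollary \ref{bueno} to $C$ then produces an $[[n,\,2\dim C-n,\,\geq d]]_{p^s}$ stabilizer code, and $2\dim C-n=2\big(n-\sum_{\mathfrak{I}_\mathbf{a}\subseteq U}i_\mathbf{a}\big)-n=n-2\sum_{\mathfrak{I}_\mathbf{a}\subseteq U}i_\mathbf{a}=k$, which is exactly the announced value.

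The only step that requires genuine care is the reduction in item (1): one must use that the spanning families of $C_U^s$ and $(C_U^s)^\perp$ from Theorems \ref{t:D_U} and \ref{DualOfSFSC} are in fact bases, assembled from the blocks $V_\mathbf{a}$ living on pairwise disjoint monomial supports, so that an inclusion of the two codes is equivalent to the inclusion of the corresponding index families of minimal cyclotomic sets. Once this is in hand, the translation through the involution $\mathbf{u}\mapsto\hat{\mathbf{u}}$, the dimension count $\dim C_U^s+\dim(C_U^s)^\perp=n$, and the appeal to the CSS construction are all routine bookkeeping.
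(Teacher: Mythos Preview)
Your proof is correct and follows essentially the same route as the paper: both argue, via Theorems~\ref{t:D_U} and~\ref{DualOfSFSC}, that $C_U^s\subseteq (C_U^s)^\perp$ is equivalent to the index condition $\mathfrak{I}_\mathbf{a}\cap U^\perp\neq\emptyset$ whenever $\mathfrak{I}_\mathbf{a}\subseteq U$, then rewrite this as $\hat{\mathfrak{I}_\mathbf{a}}\not\subseteq U$ and invoke the CSS construction for item~(2). Your version is more explicit than the paper's about the direct-sum block structure $\bigoplus_{\mathbf{a}} V_\mathbf{a}$ and about applying Corollary~\ref{bueno} to $C=(C_U^s)^\perp$, but the underlying argument is the same.
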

\begin{proof}
Theorem \ref{DualOfSFSC} proves that the dual code $(C_U^s)^{\perp}$ is given by the evaluation of the elements $\mathcal{T}_{\mathbf{a}} (\beta^{l} X^\mathbf{a})$, where $0 \leq l \leq i_\mathbf{a}-1$ and $\beta$ is a primitive element of $\mathbb{F}_{p^{s i_\mathbf{a}}}$, whenever $\mathfrak{I}_\mathbf{a} \cap U^\perp \neq \emptyset$. Therefore, $C_U^s\subseteq (C_U^s)^{\perp}$
if and only if  $\mathfrak{I}_\mathbf{a} \cap U^\perp\ne \emptyset$   when $\mathfrak{I}_\mathbf{a} \subseteq U$ and this happens if, and only if, the complementary set of $\mathfrak{I}_\mathbf{a}$ satisfies $\hat{\mathfrak{I}_\mathbf{a}}\not\subseteq U$ for all minimal cyclotomic set $\mathfrak{I}_\mathbf{a} \subseteq U $. This proves our first assertion. Second one follows from CSS code construction.
\end{proof}

\begin{rem}
{\rm
One of the best  known classes of affine variety codes are those where the ring $R$ (it could be more general than that in this paper) admits a weight function. This function takes values in an ordered semigroup and gives a suitable nested sequence of vector spaces in $R$, $L_1 \subset L_2 \subset \cdots \subset L_r$, whose dimensions increase in one unit. The evaluation of these spaces provides a nested family of primary codes $C_1 \subset C_2 \subset \cdots \subset C_r$ and their corresponding dual ones $(C_r)^{\perp} \subset (C_{r-1})^{\perp} \subset \cdots \subset (C_1)^{\perp}$. The weight function allows us to define the so-called Feng-Rao bound on the minimum distance of the previous codes. In this paper, from the point of view of classical codes and despite not having a weight function, we show a way of getting suitable sets $U_i \subset \mathcal{H}$, $1 \leq i \leq r$, to obtain nested sequences of codes $C_{U_1} \subset C_{U_2} \subset \cdots \subset C_{U_r}$ (respectively, $C_{U_1}^s \subset C_{U_2}^s \subset \cdots \subset C_{U_r}^s$) such that  $C_{U_r} \subset C_{U_r}^\perp $ (respectively, $C_{U_r}^s \subset (C_{U_r}^s)^\perp $) and therefore $(C_{U_r})^{\perp} \subset (C_{U_{r-1}})^{\perp} \subset \cdots \subset (C_{U_1})^{\perp}$ (respectively, $(C_{U_r}^s)^{\perp} \subset (C_{U_{r-1}}^s)^{\perp} \subset \cdots \subset (C_{U_1}^s)^{\perp}$ ). Our main result is to determine the  dimensions of the above mentioned codes.

It would be interesting to know an explicit formula or tight bound for the value $d((C_U^s)^{\perp})$, $ U \subset \mathcal{H}$. Reasoning as in Theorem \ref{main}, when the inclusion $ (C_{U}^s)^{\perp} \subseteq C_U^s $ holds, one can get an $[[n,k,d]]_{p^s}$ code where $n$ is as above, $k=2 \; \sum_{\mathfrak{I}_\mathbf{a} | \mathfrak{I}_\mathbf{a} \subseteq U} i_{\mathbf{a}} - n $ and $d=d(C_U^s)$. In this case a lower bound for the distance $d$ can be described. Indeed, $ d(C_U^s)\geq d(C_U)$ and a lower bound for the distance of the code $C_U$ can be computed following the procedure given in \cite{Geil-Martin}.

Codes in this paper are constructed by applying the CSS code construction to suitable linear codes. However, to get quantum codes, one can also use another bilinear pairings as the trace alternating inner product. As a referee of this paper pointed out, to relate our codes with those obtained with respect to different bilinear pairings is an interesting future work.
}
\end{rem}

\section{Examples} \label{cuatro}
In our examples, sets $U$ will be the union of different minimal cyclotomic sets, since otherwise the evaluation will not be in $\mathbb{F}_{p^s}^n$. Next, we will show a table containing parameters corresponding with some quantum codes coming from  subfield-subcodes of affine variety codes. Notice that the value $d$ in the table is a lower bound of the actual minimum distance. These codes either improve or add new parameters with respect to those ones given in \cite{edel}. In addition, some of them have a symbol GV which means that exceed the quantum Gilbert-Varshamov bound  \cite[Theorem 1.4 and Corollary 2.3]{feng}. Finally, codes with a symbol L improve the parameters of some codes in \cite[Tables I and II]{guar}.
Computations has been done by writing a Magma \cite{magma} function. After the table, the reader can find, also in tabular form, those {\it subsets} $U$ and {\it values} $N_i$, $p$, $r$ and $s$ providing the codes. For simplicity, a code given by $U$ is also called $U$.
\vspace{2mm}

\begin{center}
\begin{tabular}{||c|c|c|c|c|c||c|c|c|c|c|c||}
  \hline \hline
  Code  & Symbol &$n$ & $k$ & $d$ & $q=p^s$ & Code  & Symbol & $n$ & $k$ & $d$ & $q=p^s$ \\
  \hline \hline
  $U_1$ & GV & 200 & 184 & 4 & 3 & $U_2$ & & 147 & 127 & 3 & 2 \\
  \hline
  $U_3$ & & 147 & 123 & 4 & 2 &  $U_4$ & & 147 & 105 & 6 & 2 \\
  \hline
  $U_5$ & GV & 23 & 1 & 7 & 2 & $U_6$& & 189 & 147 & 5 & 2 \\
  \hline
  $U_7$ & & 189 & 141 & 6 & 2 & $U_8$ & & 189 & 129 & 7 & 2 \\
  \hline
  $U_9$ & & 217 & 171 & 6 & 2 & $U_{10}$ & & 245 & 209 & 4 & 2 \\
  \hline
  $U_{11}$ & & 245 & 179 & 6 & 2 & $U_{12}$ & & 441 & 411 & 4 & 2 \\
  \hline
  $U_{13}$ & & 441 & 423 & 3 & 2 & $U_{14}$ && 21 & 13 & 3 & 4 \\
  \hline
  $U_{15}$ & & 21 & 7 & 5 & 4 & $U_{16}$ & & 35 & 5 & 7 & 4 \\
  \hline
  $U_{17}$ & & 45 & 25 & 6 & 4 & $U_{18}$ & & 75 & 67 & 3 & 4 \\
  \hline
  $U_{19}$ & & 27 & 19 & 3 & 4 & $U_{20}$ & & 15  & 9 & 3  & 4 \\
  \hline
  $U_{21}$ & &  45 & 33 & 4  &4 & $U_{22}$ & & 147  & 137 & 3  & 4 \\
  \hline
  $U_{23}$ & &  73 & 67 & 3 & 8 & $U_{24}$ &L& 73  & 55 & 6  & 8\\
  \hline
  $U_{25}$ & &  21 & 9 & 5 & 8 & $U_{26}$ &GV& 63  & 55 & 4  & 8\\
  \hline
  $U_{27}$ & &  63 & 49 & 5 & 8 & $U_{28}$ & L& 73  & 43 & 8  & 8\\
  \hline
  $U_{29}$ & &  64 & 48 & 4 & 3 & $U_{30}$ & & 64  & 52 & 3  & 3\\
  \hline
  $U_{31}$ &GV, L& 11  & 1 & 5  & 3 &  $U_{32}$ & L &  71  & 51 & 5  & 5\\
  \hline
  $U_{33}$ &L &  31 & 13 & 6 & 5 & $U_{34}$ & &  71  & 41 & 8  & 5\\
  \hline
  $U_{35}$ & &  96 & 88 & 3 & 5 & $U_{36}$ & & 96  & 84 & 4  & 5\\
  \hline
  $U_{37}$ &  &  36 & 30 & 3 & 7 & $U_{38}$ & & 36  & 26 & 4  & 7\\
  \hline
  \hline
\end{tabular}
\end{center}
\vspace{2mm}

\begin{center}
\begin{tabular}{||c|c|c|c|c|c|c||}
  \hline \hline
  Code / Subset & $p$ & $r$ & $s$ & $N_1$  & $N_2$ & $N_3$ \\
  \hline \hline
  $
  \begin{array}{r}
  U_1= \{ ( 3, 3, 3 ),( 4, 4, 1 ),( 2, 2, 3 ),( 1, 1, 1 ), ( 3, 0, 6 ),( 4, 0, 2 ),\\
   ( 2, 0, 6 ), ( 1, 0, 2 )\}\\
   \end{array}
   $
    & 3 & 4 & 1 & 5 & 5 &8\\
  \hline
  $
  \begin{array}{r}
   U_2= \{ ( 2, 4, 2 ),( 4, 1, 1 ),( 1, 2, 2 ),( 2, 4, 1 ), ( 4, 1, 2), ( 1, 2, 1 ),\\ ( 2, 0, 0 ),( 4, 0, 0 ), ( 1, 0, 0 ),( 2, 5, 0 ),( 4, 3, 0 ),( 1, 6, 0 ) \}\\
   \end{array}
   $
    & 2 & 6 & 1 & 7 & 7 &3\\
   \hline
  $
  \begin{array}{r}
   U_3= \{ ( 6, 2, 2 ),( 5, 4, 1 ),( 3, 1, 2 ),( 6, 2, 1 ),( 5, 4, 2 ),( 3, 1, 1 ),\\ ( 2, 3, 0 ),( 4, 6, 0 ),( 1, 5, 0 ),( 6, 0, 0 ),( 5, 0, 0 ),( 3, 0, 0 ) \}\\
   \end{array}
   $
    & 2 & 6 & 1 & 7 & 7 &3\\
  \hline
  $
  \begin{array}{r}
   U_4= \{ ( 2, 4, 0 ),( 4, 1, 0 ),( 1, 2, 0 ),( 6, 0, 2 ),( 5, 0, 1 ),( 3, 0, 2 ),\\ ( 6, 0, 1 ),( 5, 0, 2 ),( 3, 0, 1 ),( 6, 2, 0 ),( 5, 4, 0 ),( 3, 1, 0 ),\\ ( 0, 6, 2 ),( 0, 5, 1 ),( 0, 3, 2 ),( 0, 6, 1 ), ( 0, 5, 2 ),( 0, 3, 1 ),\\ ( 2, 0, 0),( 4, 0, 0 ),( 1, 0, 0 )\}\\
   \end{array}
   $
    & 2 & 6 & 1 & 7 & 7 &3\\
    \hline
    $
  \begin{array}{r}
   U_5= \{ 2, 4, 8, 16, 9, 18, 13, 3, 6, 12, 1 \}\\
   \end{array}
   $
    & 2 & 11 & 1 & 23 & - &-\\
   \hline
  \hline
   \end{tabular}
\end{center}
\vspace{2mm}

    \begin{center}
\begin{tabular}{||c|c|c|c|c|c|c||}
  \hline \hline
  Code / Subset & $p$ & $r$ & $s$ & $N_1$  & $N_2$ & $N_3$ \\
  \hline \hline
    $
  \begin{array}{r}
   U_6= \{ ( 0, 2, 0 ),( 0, 4, 0 ),( 0, 1, 0 ),( 0, 6, 2 ),( 0, 5, 1 ),( 0, 3, 2 ),\\ ( 0, 6, 1 ),( 0, 5, 2 ),( 0, 3, 1 ),( 2, 6, 2 ),( 4, 5, 1 ),( 8, 3, 2 ),\\  (7, 6, 1 ),( 5, 5, 2 ),( 1, 3, 1 ),( 2, 1, 0 ),( 4, 2, 0 ),( 8, 4, 0 ),\\( 7, 1, 0 ),(5, 2, 0 ),( 1, 4, 0 )\}\\
   \end{array}
   $
    & 2 & 6 & 1 & 9 & 7 &3\\
    \hline
    $
  \begin{array}{r}
   U_7= \{ ( 2, 4, 1 ),( 4, 1, 2 ),( 8, 2, 1 ),( 7, 4, 2 ),( 5, 1, 1 ),( 1, 2, 2 ),\\ ( 0, 6, 2 ),(0,5,1),  ( 0, 3, 2 ),( 0, 6, 1 ),( 0, 5, 2 ),( 0, 3, 1 ),\\ (2, 6, 2 ),( 4, 5, 1 ),( 8, 3, 2 ),( 7, 6, 1 ),( 5, 5, 2 ),( 1, 3, 1 ),\\ ( 2, 2, 0 ),( 4, 4, 0 ),( 8, 1, 0 ), ( 7, 2, 0 ), ( 5, 4, 0 ), ( 1, 1, 0 )\}\\
   \end{array}
   $
    & 2 & 6 & 1 & 9 & 7 &3\\
    \hline
    $
  \begin{array}{r}
   U_8= \{ ( 2, 3, 0 ),( 4, 6, 0 ),( 8, 5, 0 ),( 7, 3, 0 ),( 5, 6, 0 ),( 1, 5, 0 ),\\ ( 6, 6, 0 ),( 3, 5, 0 ),( 6, 3, 0 ),( 3, 6, 0 ),( 6, 5, 0 ),( 3, 3, 0),\\ ( 2, 3, 1 ),( 4, 6, 2 ),( 8, 5, 1 ),( 7, 3, 2 ),( 5, 6, 1 ),( 1, 5, 2 ),\\ (2, 4, 2 ),( 4, 1, 1 ),( 8, 2, 2 ),( 7, 4, 1 ),( 5, 1, 2 ),( 1, 2, 1 ),\\  (6, 2, 2 ),( 3, 4, 1 ),( 6, 1, 2 ),( 3, 2, 1 ),( 6, 4, 2 ),( 3, 1, 1 )\}\\
   \end{array}
   $
    & 2 & 6 & 1 & 9 & 7 &3\\
    \hline
    $
  \begin{array}{r}
   U_9= \{( 0, 2 ),( 0, 4 ),( 0, 1 ),( 14, 0 ),( 28, 0 ),( 25, 0 ),( 19, 0 ), \\ ( 7, 0 ),( 22, 6 ),( 13, 5 ),( 26, 3 ),( 21, 6 ),( 11, 5 ), (22, 3 ), ( 13, 6 ),\\ ( 26, 5 ),( 21, 3 ),( 11, 6 ),( 22, 5 ),( 13, 3 ),( 26, 6 ), (21, 5 ),( 11, 3 )\}\\
   \end{array}
   $
    & 2 & 15 & 1 & 31 & 7 &-\\
    \hline
    $
  \begin{array}{r}
   U_{10}= \{ ( 2, 4, 0 ),( 4, 1, 0 ),( 1, 2, 0 ),( 6, 0, 2 ),( 5, 0, 4 ),( 3, 0, 3 ),\\ ( 6, 0, 1 ),( 5, 0, 2 ),( 3, 0, 4 ),( 6, 0, 3 ),( 5, 0, 1 ),( 3, 0, 2 ),\\  (6, 0, 4 ),( 5, 0, 3 ),( 3, 0, 1 ),( 2, 0, 0 ),( 4, 0, 0 ), (1, 0, 0 )\}\\
   \end{array}
   $
    & 2 & 12 & 1 & 7 & 7 &5\\
    \hline
  \hline
   \end{tabular}
\end{center}
\vspace{2mm}

    \begin{center}
\begin{tabular}{||c|c|c|c|c|c|c||}
\hline \hline
  Code / Subset & $p$ & $r$ & $s$ & $N_1$  & $N_2$ & $N_3$ \\
  \hline \hline
    $
  \begin{array}{r}
   U_{11}= \{ ( 2, 4, 2 ),( 4, 1, 4 ),( 1, 2, 3 ),( 2, 4, 1 ),( 4, 1, 2 ),( 1, 2, 4 ),\\ ( 2, 4, 3 ),( 4, 1, 1 ),( 1, 2, 2 ),( 2, 4, 4 ),( 4, 1, 3 ),( 1, 2, 1 ),\\ ( 6, 2, 0 ),( 5, 4, 0 ),( 3, 1, 0 ),( 6, 4, 2 ),( 5, 1, 4 ),( 3, 2, 3 ),\\( 6, 4, 1 ),( 5, 1, 2 ),( 3, 2, 4 ),( 6, 4, 3 ),( 5, 1, 1 ),( 3, 2, 2 ),\\( 6, 4, 4 ),( 5, 1, 3 ),( 3, 2, 1 ),( 6, 6, 0 ), ( 5, 5, 0 ),( 3, 3, 0 ),\\( 6, 0, 0 ),( 5, 0, 0 ),( 3, 0, 0 ) \}\\
   \end{array}
   $
    & 2 & 12 & 1 & 7 & 7 &5\\\hline
    $
  \begin{array}{r}
   U_{12}= \{ ( 0, 6, 2 ),( 0, 5, 4 ),( 0, 3, 1 ),
( 6, 2, 2 ),( 3, 4, 4 ),( 6, 1, 1 ),\\( 3, 2, 2 ),( 6, 4, 4 ),( 3, 1, 1 ),( 2, 2, 3 ),( 4, 4, 6 ),( 8, 1, 5 ),\\( 7, 2, 3 ),( 5, 4, 6 ), ( 1, 1, 5 ) \}\\
   \end{array}
   $
    & 2 & 12 & 1 & 9 & 7 &7\\\hline
    $
  \begin{array}{r}
   U_{13}= \{ ( 2, 0, 1 ),( 4, 0, 2 ),
( 8, 0, 4 ), ( 7, 0, 1 ),( 5, 0, 2 ), ( 1, 0, 4 ),\\( 0, 2, 0 ),( 0, 4, 0 ), ( 0, 1, 0 )\}\\
   \end{array}
   $
    & 2 & 12 & 1 & 9 & 7 &7\\\hline
    $
  \begin{array}{r}
   U_{14}= \{  0, 2 ),( 5, 1 ),( 6, 1 ),( 3, 1 )\}\\
   \end{array}
   $
    & 2 & 6 & 2 & 7 & 3 &-\\\hline
    $
  \begin{array}{r}
   U_{15}= \{ ( 0, 2 ),( 4, 1 ),( 2, 1 ),( 1, 1 ),( 5, 0 ),( 6, 0 ),( 3, 0 ) \}\\
   \end{array}
   $
    & 2 & 6 & 2 & 7 & 3 &-\\\hline
    $
  \begin{array}{r}
   U_{16}= \{ ( 4, 4 ),( 2, 1 ),( 1, 4 ),( 4, 1 ),( 2, 4 ),( 1, 1 ),( 5, 0 ),( 6, 0 ),\\( 3, 0 ),( 5, 3 ),( 6, 2 ),( 3, 3 ),( 5, 2 ),( 6, 3 ),( 3, 2 ) \}\\
   \end{array}
   $
    & 2 & 12 & 2 & 7 & 5 &-\\
    \hline
    $
  \begin{array}{r}
   U_{17}= \{ ( 3, 1, 1 ),( 2, 1, 1 ),( 0, 0, 2 ),( 4, 1, 2 ),( 1, 1, 2 ),( 4, 2, 0 ),\\ ( 0, 2, 2 ),( 3, 0, 1 ),( 2, 0, 1 ) \}\\
   \end{array}
   $
    & 2 & 4 & 2 & 5 & 3 &3\\\hline
    $
  \begin{array}{r}
   U_{18}= \{ ( 3, 0, 1 ),( 2, 0, 1 ),( 3, 1, 2 ),( 2, 4, 2 ) \}\\
   \end{array}
   $
    & 2 & 4 & 2 & 5 & 5 &3\\\hline
    $
  \begin{array}{r}
   U_{19}= \{ (2, 2, 0 ),( 0, 1, 2),( 2, 0, 1 ),( 1, 0, 0 ) \}\\
   \end{array}
   $
    & 2 & 2 & 2 & 3 & 3 &3\\ \hline
     $
  \begin{array}{r}
   U_{20}= \{ ( 0, 2 ),( 3, 1 ),( 2, 1 ) \}\\
   \end{array}
   $
    & 2 & 4 & 2 & 5 & 3 &-\\
  \hline
  \hline
\end{tabular}
\end{center}
\vspace{2mm}

 \begin{center}
\begin{tabular}{||c|c|c|c|c|c|c||}
\hline \hline
  Code / Subset & $p$ & $r$ & $s$ & $N_1$  & $N_2$ & $N_3$ \\
  \hline \hline
    $
  \begin{array}{r}
   U_{21}= \{ ( 4, 0, 2 ),( 1, 0, 2 ),( 0, 0, 1 ),( 0, 1, 0 ),( 0, 1, 1 ),( 0, 2, 1 )\}\\
   \end{array}
   $
    & 2 & 4 & 2 & 5 & 3 &3\\\hline
    $
  \begin{array}{r}
   U_{22}= \{ (4, 0, 1 ),( 2, 0, 1 ),( 1, 0, 1 ),( 4, 5, 2 ),( 2, 6, 2 ),( 1, 3, 2 ) \}\\
   \end{array}
   $
    & 2 & 6 & 2 & 7 & 7 &3\\\hline
    $
  \begin{array}{r}
   U_{23}= \{ 16, 55, 2 \}\\
   \end{array}
   $
    & 2 & 9 & 3 & 73 & - &-\\ \hline
    $
  \begin{array}{r}
   U_{24}= \{ 16, 55, 2,32, 37,4 , 53, 59, 34 \}\\
   \end{array}
   $
    & 2 & 9 & 3 & 73 & - &-\\
    \hline
    $
    \begin{array}{r}
   U_{25}= \{ (5, 2),(5, 1), (1, 2),
        (1, 1), (2, 0), (4, 0)\}\\
   \end{array}
   $
    & 2 & 6 & 3 & 7 & 3 &-\\ \hline
    $
  \begin{array}{r}
   U_{26}= \{ (3, 0), (6, 0), (1, 5), (1, 4)\}\\
   \end{array}
   $
    & 2 & 6 & 3 & 7 & 9 &-\\
    \hline
    $
   \begin{array}{r}
   U_{27}= \{ (6, 6 ), (6, 3), (3, 0), (1, 8 ),  ( 1, 1 ), ( 5, 5 ), ( 5, 4 )\}\\
   \end{array}
   $
    & 2 & 6 & 3 & 7 & 9 &-\\
    \hline
    $
    \begin{array}{r}
   U_{28}= \{ (23, 38, 12, 22, 30, 21,54, 67, 25, 56, 10, 7, 15, 47, 11 \}\\
   \end{array}
   $
    & 2 & 9 & 3 & 73 & - &-\\
    \hline
    $
   \begin{array}{r}
   U_{29}= \{ ( 3, 2 ),( 1, 6 ),( 0, 3 ),( 0, 1 ),( 7, 0 ),( 5, 0 ),( 7, 3 ),( 5, 1 ) \}\\
   \end{array}
   $
    & 3 & 2 & 1 & 8 & 8 &-\\ \hline
    $
  \begin{array}{r}
   U_{30}= \{ ( 3, 0 ),( 1, 0 ),( 6, 5 ),( 2, 7 ),(7, 7),( 5, 5 )\}\\
   \end{array}
   $
    & 3 & 2 & 1 & 8 & 8 &-\\
    \hline
    $
    \begin{array}{r}
   U_{31}= \{ (3,9,5,4,1\}\\
   \end{array}
   $
    & 3 & 5 & 1 & 11 & - &-\\
    \hline
  $
    \begin{array}{r}
   U_{32}= \{ 39, 53, 52, 47, 22 , 65, 41, 63, 31, 13 \}\\
   \end{array}
   $
    & 5 & 5 & 1 & 71 & - &-\\
    \hline
    $
  \begin{array}{r}
   U_{33}= \{ 15, 13, 3, 20, 7, 4, 29, 21, 12 \}\\
   \end{array}
   $
    & 5 & 3 & 1 & 31 & - &-\\
    \hline
    $
  \begin{array}{r}
   U_{34}= \{ 64, 36, 38, 48, 27, 15, 4, 20, 29, 3, 45, 12, 60, 16, 9\}\\
   \end{array}
   $
    & 5 & 5 & 1 & 71 & - &-\\
    \hline
    $
  \begin{array}{r}
   U_{35}= \{ (18, 1), (17, 0), (13, 0 ), (6, 0) \}\\
   \end{array}
   $
    & 5 & 2 & 1 & 24 & 4 &-\\
    \hline
    $
  \begin{array}{r}
   U_{36}= \{ (23, 3), (19, 3), (11, 2), (7, 2), (18, 0), (12, 3) \}\\
   \end{array}
   $
    & 5 & 2 & 1 & 24 & 4 &-\\
    \hline
     $
  \begin{array}{r}
   U_{37}= \{ (2, 1), (0, 4), (1, 3) \}\\
   \end{array}
   $
    & 7 & 2 & 1 & 6 & 6 &-\\
    \hline
     $
  \begin{array}{r}
   U_{38}= \{ (2, 0), (2, 2), (0, 5), (1, 1), (1, 2)\}\\
   \end{array}
   $
    & 7 & 2 & 1 & 6 & 6 &-\\
  \hline
  \hline
\end{tabular}
\end{center}
\vspace{2mm}

\end{document}